\DeclarePairedDelimiter \norm{\lVert}{\rVert}%
\newtheorem{theorem}{Theorem}
\newtheorem{lemma}{Lemma}
\theoremstyle{definition}
\theoremstyle{remark}
\begin{document}	
	\title{\vspace*{0.25in}Joint Coordination-Channel Coding  for Strong Coordination over Noisy Channels\\ Based on Polar Codes}
	\author{\thanks{This work is supported by NSF grants CCF-1440014, CCF-1439465.}
		\IEEEauthorblockN{Sarah A. Obead, J\"{o}rg Kliewer}
		\IEEEauthorblockA{Department of Electrical and Computer Engineering\\
			New Jersey Institute of Technology\\
			Newark, New Jersey 07102\\
			Email:sao23@njit.edu, jkliewer@njit.edu\\}
		\and 
		\IEEEauthorblockN{ Badri N. Vellambi}
		\IEEEauthorblockA{Research School of Computer Science\\
			Australian National University\\
			Acton, Australia 2601\\
			Email: badri.n.vellambi@ieee.org}
	}
	
	\maketitle
	\thispagestyle{empty}
	\pagestyle{empty}

	\begin{abstract}
		We construct a joint coordination-channel polar coding
                scheme for strong coordination of actions between two agents
                $\mathsf X$ and $\mathsf Y$, which communicate over a
                discrete memoryless channel (DMC) such that the joint
                distribution of actions follows a prescribed probability
                distribution. We show that polar codes are able to achieve
                our previously established  inner bound to the strong noisy coordination capacity
                region  and thus provide a constructive alternative to a random coding proof. Our polar coding scheme also offers a constructive solution to a channel simulation problem where a DMC and shared randomness are together employed to simulate another DMC. In particular, our proposed solution is able to utilize the randomness of the DMC to reduce the amount of local randomness required to generate the sequence of actions at agent $\mathsf Y$. By leveraging our earlier random coding results for this problem, we conclude that the proposed joint coordination-channel coding scheme strictly outperforms a separate scheme in terms of achievable communication rate for the same amount of injected randomness into both systems.
	\end{abstract}
	\vspace{-0.5ex}
	\section{Introduction}
	\vspace{-0.5ex}
	A fundamental problem in decentralized networks is to coordinate
	activities of different agents with the goal of reaching a state of
	agreement. Such a problem arises in a multitude of applications,
	including networks of autonomous robots, smart traffic control, and
	distributed computing problems. For such applications, coordination is understood to be the ability to arrive at a prescribed joint distribution of actions at all agents in the network. In information theory, two different notions of coordination are explored: (i) empirical coordination, which only requires the
	normalized histogram of induced joint actions to approach a desired
	target distribution, and (ii) strong coordination, where the
	sequence of induced joint actions must be statistically close (i.e., nearly
	indistinguishable) from a given target probability mass function (pmf).  
	
	A significant amount of work has been devoted to finding the
	capacity regions of various coordination problems based on both
	empirical and strong coordination
	\cite{soljanin2002compressing,cuff2010coordination,gohari2011generating,haddadpour2012coordination,cuff2013distributed,bereyhi2013empirical,bloch2013strong,BK14,yassaee2015channel,VKB16},
	where
	\cite{haddadpour2012coordination,bereyhi2013empirical,bloch2013strong,BK14,VKB16} focus on
	small to moderate network settings. 

While all these works address the noiseless case,
	coordination over noisy channels has received only little attention in
	the literature so far. However, notable exceptions are
	\cite{CS11,HYBGA13,GuliaISIT}. For example, in \cite{CS11}  joint empirical coordination of the channel
	inputs/outputs of a noisy communication channel with source and
	reproduction sequences is considered. Also, in \cite{HYBGA13} the
	notion of strong coordination is used to simulate a discrete
	memoryless channel via another channel. Recently, \cite{GuliaISIT}
        explored the strong coordination variant of the problem investigated
        in \cite{CS11}.

	As an alternative to the impracticalities of random coding, solutions for empirical and strong coordination problems have been proposed based on low-complexity polar-codes introduced by Arikan \cite{ChPolarztion2009Arikan,SPolarztion2010Arikan}. For example, polar coding for
	strong point-to-point coordination is addressed in
	\cite{bloch2012strong,CBJ16}, and empirical coordination
	for cascade networks in \cite{blasco2012polar}, respectively. The
	only existing design of polar codes for the
	noisy empirical coordination case 
	\cite{PCempirical2016} is based on the joint source-channel
	coordination approach in \cite{CS11}. However, to
	the best of our knowledge, polar code designs
	for noisy \emph{strong} coordination have not been proposed in the literature.
	
	In this work we consider the point-to-point coordination setup depicted in Fig.~\ref{fig:P2PCoordination} where only source and reproduction sequences are coordinated via a suitable polar coding scheme over DMCs. In particular, we design an explicit low-complexity nested polar coding scheme for strong coordination over noisy channels that
	achieves the inner bound of the two-node network capacity region of
	our earlier work 
	\cite{obead17:_stron_coord_noisy_chann}. In this work, we show
	that a joint coordination-channel coding scheme is able to strictly
	outperform a separation-based scheme in terms of achievable
	communication rate if the same amount of randomness is injected into the system. Note that our proposed  joint
	coordination-channel polar coding scheme employs nested
	codebooks similar to the polar codes for the broadcast
	channel \cite{PC_BCGoela2015}. Further, 
	our polar coding scheme also offers a constructive solution to a channel simulation problem where a DMC is employed to simulate another DMC in the presence of shared randomness~\cite{HYBGA13}.

	The remainder of the paper is organized as follows. Section~II
	introduces the notation, the model under investigation, and a random
	coding construction. Section~III provides our proposed joint
	coordination-channel coding design and a proof to show that this
	design achieves the random coding inner bound.

	\vspace{-1.5ex}
	\section{Problem Statement}	
	\vspace{-0.8ex}
	\subsection{Notation}
	Let $N\triangleq2^n, n\in \mathbb{N}$. We denote the source
        polarization transform as  $G_n=RF^{\otimes n},$ where $R$ is the bit-reversal mapping defined in \cite{ChPolarztion2009Arikan}, 
	$F =\left[\begin{smallmatrix}
	1 &  0\\
	0 &  0
	\end{smallmatrix}\right],$
	and $F^{\otimes n}$ denotes the $n$-{th} Kronecker power of $F.$
	Given $X^{1:N}\triangleq({X^1,X^2,\ldots,X^N})$ and ${\cal A}\subset  \llbracket 1,N \rrbracket$, we let $X^N[{\cal A}]$ denote the components $X^i$ such that $i \in {\cal A}.$ Given two distributions $P_X(x)$ and $Q_X(x)$ defined over an alphabet
	$\cal{X},$ we let
	$\mathbb{D}(P_X(x)||Q_X(x))$ and  $\norm{P_X(x)-Q_X(x)}_{{\scriptscriptstyle TV}} $ denote the Kullback-Leibler (KL) divergence and the total variation, respectively. Given a pmf $P_X(x)$ we let $\min^{*}(P_X) = \min \, \{P_X(x): P_X(x)>0\}$.
	
	\subsection{System Model}
	The point-to-point coordination setup considered in this work is depicted
	in Fig.~\ref{fig:P2PCoordination}. Node $\mathsf X$ receives a sequence of actions
	$X^N \in \mathcal{X}^N$ specified by nature where $X^N$ is i.i.d.~according
	to a pmf $p_X$. Both nodes have access to shared randomness $J$ at rate
	$R_o$ bits/action from a common source, and each node possesses local
	randomness $M_{\ell}$ at rate $\rho_{\ell}$, ${\ell}=1,2$. 
	
	\vspace{-3ex}
	\begin{figure}[h!]
		\centering
		{\includegraphics[scale=0.56]{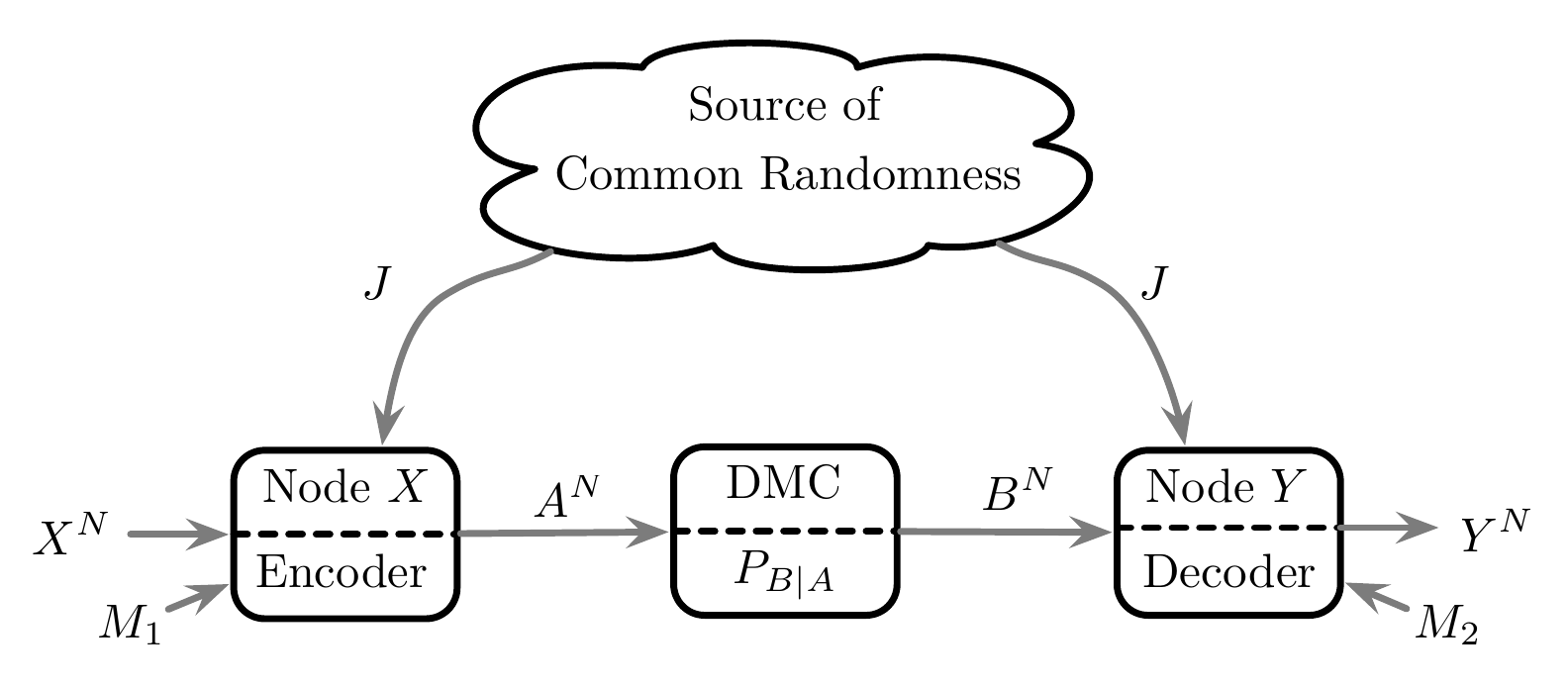}}
		\vspace{-3ex}
		\caption{Point-to-point strong coordination over a DMC.}
		\vspace{-1ex}
		\label{fig:P2PCoordination}
	\end{figure} 
	
	We wish to communicate a codeword $A^N$ corresponding to the coordination message over the rate-limited DMC $P_{B|A}$ to Node $\mathsf Y$. The \emph{codeword} $A^N$ is
	constructed based on the input action sequence $X^N$, the local randomness
	$M_1$ at Node $\mathsf X$, and the common randomness $J$. Node $\mathsf Y$ generates a
	sequence of actions $Y^N\in \mathcal{Y}^N$ based on the received codeword
	$B^N$, common randomness $J$, and local randomness $M_2$. 
	
	By assumption, the common randomness is independent
	of the action specified at Node $\mathsf X$. A strong coordination coding scheme with rates $(R_c, R_o, \rho_1,\rho_2)$ is deemed \emph{achievable} if for each $\epsilon>0$, there exists an $N\in\mathbb{N}$ such that the joint
	pmf of actions $\tilde{P}_{X^N,Y^N}$ induced by this scheme and the $N$ i.i.d. copies of desired joint pmf $(X,Y)\sim q_{XY},$ 
	$Q_{X^NY^N},$ are \emph{close} in total variation, i.e.,
	\begin{equation}\label{eq:StrngCoorCondtion}
	\norm{\tilde{P}_{X^NY^N}-Q_{X^NY^N}}_{{\scriptscriptstyle TV}} 
	<\epsilon.
	\end{equation}
	
	\subsection{Random Coding Construction}\label{sec:RandomCoding}
	Consider auxiliary random variables ${A\in\mathcal{A} \text{ and } C\in\mathcal{C}}$
	with ${(A,C)\sim P_{AC}}$ be jointly correlated with $(X,Y)$ as
	$P_{XYABC}=P_{AC}P_{X|AC}P_{B|A}P_{Y|BC}$. The joint strong
	coordination-channel random code with parameters $(R_c,R_o,R_a,N)$ \cite{obead17:_stron_coord_noisy_chann}, where ${\cal I} \triangleq \llbracket 1,2^{NR_c} \rrbracket$, ${\cal J} \triangleq\llbracket 1,2^{NR_o} \rrbracket$, and ${\cal K} \triangleq \llbracket 1,2^{NR_a} \rrbracket$, consists of
	\begin{enumerate}	
		\item Nested codebooks: A codebook $\mathscr{C}$ of size $2^{N(R_o+R_c)}$ is generated
		i.i.d.~according to a pmf $P_{C}$, i.e.,
		$C^N_{ij}\sim \prod_{l=1}^{N} P_{C}(\cdot)$ for all $(i,j) \in \cal{I}\times \cal{J}$. A codebook $\mathscr{A}$ is generated by randomly selecting $A^N_{ijk}\sim  \prod_{l=1}^{N}  P_{A|C}(\cdot|{C_{ij}^N})$ for all $(i,j,k) \in \cal{I}\times\cal{J}\times\cal{K}$. 
		\item Encoding functions:\\
		$C^N\!:   \llbracket 1,2^{NR_c} \rrbracket \!\times\! \llbracket 1,2^{NR_o} \rrbracket \! \rightarrow \mathcal{C}^N$,\\
		$A^N\!:   \llbracket 1,2^{NR_c} \rrbracket\!\times\! \llbracket 1,2^{NR_o} \rrbracket\!\times\!\llbracket 1,2^{NR_a} \rrbracket\! \rightarrow \mathcal{A}^N$.
		\item The indices $I,J,K$ are independent and uniformly distributed over $\cal{I}$,
		$\cal{J}$, and $\cal{K}$, respectively. 
		These indices select the pair of codewords $C^N_{IJ}$ and $A^N_{IJK}$ from codebooks $\mathscr{C}$ and $\mathscr{A}$.
		\item The selected codeword $A^N_{IJK}$ is sent
		through the communication DMC $P_{B|A}$, whose output $B^N$ is used to decode codeword $C^N_{\hat{I}J}$, and both are then passed
		through a DMC $P_{Y|BC}$ to obtain $Y^N$. 
	\end{enumerate} 
	The corresponding scheme is displayed in Fig.~\ref{fig:StrongCoordinationAllied}. 
	
	\begin{figure}[h!]
		\hspace{-1ex}\centering
		{\includegraphics[scale=0.51]{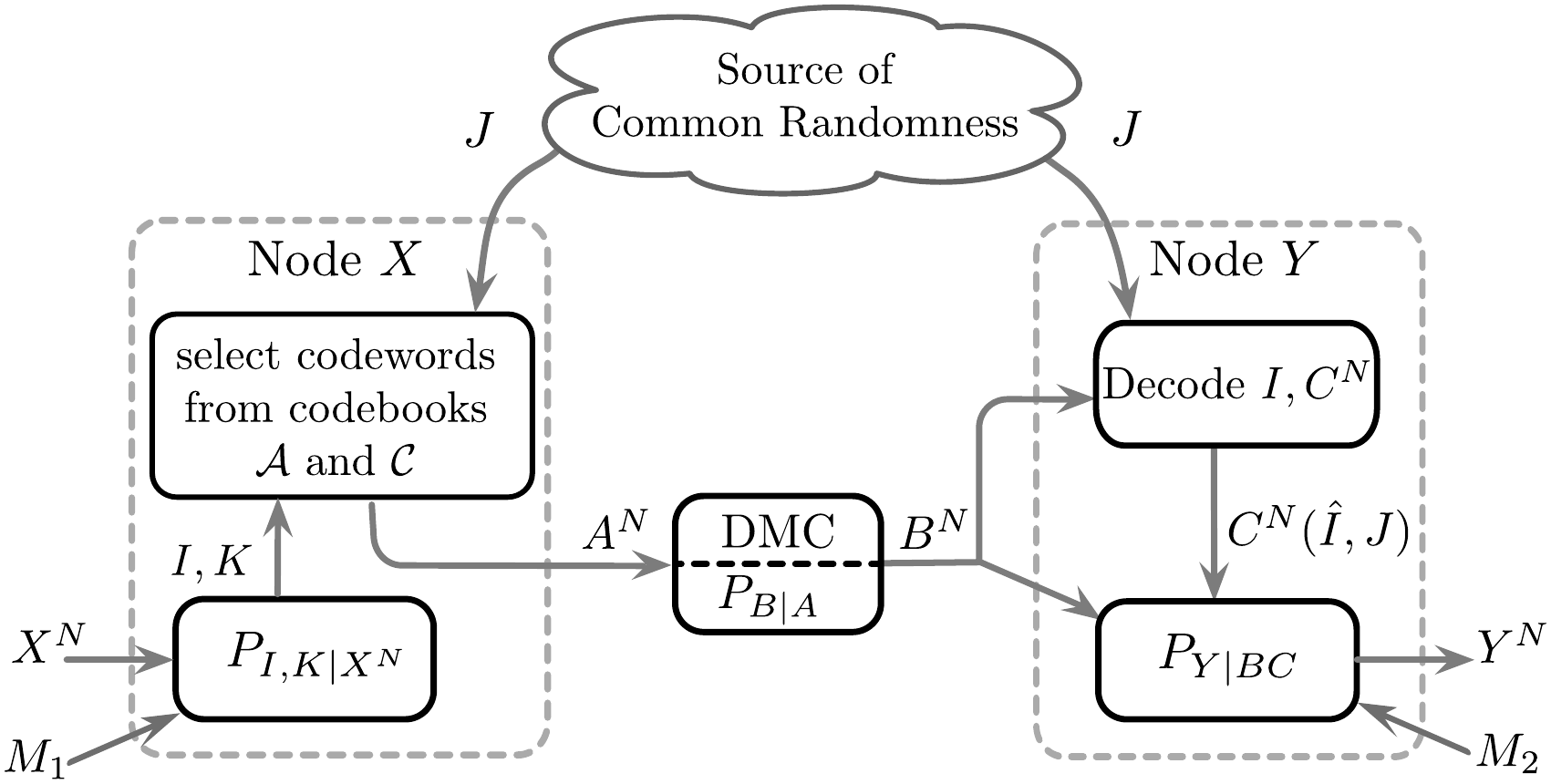}}
		\vspace{-3ex}
		\caption{Joint strong coordination-channel coding scheme.} 
		\vspace{-1ex}
		\label{fig:StrongCoordinationAllied}
	\end{figure}
	
	The following theorem provides the inner bound for strong coordination region achieved by such joint coordination-channel code.

	\begin{theorem}(Strong coordination inner bound \cite{obead17:_stron_coord_noisy_chann})\label{Thm:JointCRR}
		A tuple $(R_o, \rho_1,\rho_2)$ is achievable 
		for the strong noisy communication setup in
		Fig.~\ref{fig:P2PCoordination} if for some $R_a,R_c\geq0$,
		\begin{subequations}\label{equ:JSRR}
			\begin{align} 
			R_a+R_o+R_c & {\;>\;} I(XY;AC), \label{equ:JSRR1}\\
			R_o+R_c &{\;>\;} I(XY;C),\label{equ:JSRR2}\\
			R_a+R_c &{\;>\;} I(X;AC),\label{equ:JSRR3}\\
			R_c &{\;>\;} I(X;C),\label{equ:JSRR4}\\
			R_c &<I(B;C),\label{equ:JSRR5}\\
			\rho_1 &{\;>\;} R_a+R_c-I(X;AC),\label{equ:JSRR6}\\
			\rho_2 &{\;>\;} H(Y|BC).\label{equ:JSRR7} 
			\end{align}
		\end{subequations}
	\end{theorem}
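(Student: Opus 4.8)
The plan is to prove achievability by a random coding argument that merges a soft-covering (channel resolvability) analysis with a point-to-point channel code for the cloud index, with all approximations tracked in total variation. Starting from the nested codebook construction of Section~\ref{sec:RandomCoding}, I would first introduce an \emph{ideal} distribution on $(X^N,Y^N,C^N,A^N,B^N,I,J,K)$ in which $I,J,K$ are independent and uniform, the codewords $C^N_{IJ}$ and $A^N_{IJK}$ are selected accordingly, and $X^N$, $B^N$, $Y^N$ are then generated through $P_{X|AC}^{\otimes N}$, $P_{B|A}^{\otimes N}$, and $P_{Y|BC}^{\otimes N}$, respectively. The argument then relates this ideal distribution both to the target $Q_{X^NY^N}$ and to the distribution $\tilde P$ that the actual scheme induces, bounding each discrepancy separately.

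\emph{Output-side soft covering.} Because $(A,C)$ reaches $(X,Y)$ through the effective channel $P_{XY|AC}$, the soft-covering lemma gives that, on average over the codebooks, the $(X^N,Y^N)$-marginal of the ideal distribution is $\epsilon$-close in total variation to $Q_{X^NY^N}$ provided $R_a+R_o+R_c>I(XY;AC)$, which is \eqref{equ:JSRR1}. Since node $\mathsf Y$ only recovers the cloud index before producing $Y^N$, a superposition form of the lemma applied to the cloud codebook $\mathscr{C}$ additionally requires $R_o+R_c>I(XY;C)$, i.e., \eqref{equ:JSRR2}, so that $(X^N,Y^N)$ together with the selected cloud center $C^N_{IJ}$ is jointly consistent with the target.

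\emph{Reconstruction at node $\mathsf X$ and channel coding for the cloud index.} In the real scheme $X^N$ is given by nature as $p_X^{\otimes N}$, and node $\mathsf X$ must, knowing $J$, select the pair $(I,K)$ through a likelihood/joint-typicality encoder; conditioning on $J=j$ leaves $2^{NR_c}$ cloud centers and $2^{N(R_a+R_c)}$ satellite codewords. Applying the conditional soft-covering lemma in the reverse direction at the cloud and satellite levels shows that the reconstructed $X^N$ is close to $p_X^{\otimes N}$ as soon as $R_c>I(X;C)$ and $R_a+R_c>I(X;AC)$, which are \eqref{equ:JSRR4} and \eqref{equ:JSRR3} (the rate $R_o$ is absent here because $J$ is common to both nodes). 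Next, since $B\!-\!A\!-\!C$ is a Markov chain, a standard random channel-coding argument lets node $\mathsf Y$ recover $\hat I=I$ from $(B^N,J)$ with vanishing error probability whenever $R_c<I(B;C)$, i.e., \eqref{equ:JSRR5}; on the resulting high-probability event $Y^N$ is generated from the correct pair $(B^N,C^N_{IJ})$, and the decoding-error event is absorbed into the total-variation estimate by the union bound.

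\emph{Local randomness and assembly.} Within superbin $J$ the number of satellite codewords jointly typical with a given $X^N$ is about $2^{N(R_a+R_c-I(X;AC))}$, and node $\mathsf X$ resolves this residual ambiguity with its local randomness, forcing $\rho_1>R_a+R_c-I(X;AC)$, i.e., \eqref{equ:JSRR6}; likewise, generating $Y^N\sim P_{Y|BC}^{\otimes N}(\cdot\mid B^N,C^N_{IJ})$ costs on average $H(Y|BC)$ bits of local randomness per action, giving $\rho_2>H(Y|BC)$, i.e., \eqref{equ:JSRR7}. Combining the finitely many total-variation bounds by the triangle inequality, and then fixing a single good codebook by a standard selection (expurgation) argument, yields a coding scheme satisfying \eqref{eq:StrngCoorCondtion} for the claimed rates. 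I expect the main difficulty to be the bookkeeping that simultaneously reconciles the reverse-direction reconstruction at node $\mathsf X$, the forward soft-covering of $(X^N,Y^N)$, and the cloud-decoding error event; in particular, verifying that the nested (superposition) codebook structure makes \eqref{equ:JSRR1}--\eqref{equ:JSRR2} the correct pair of conditions for both the output-side covering and the mutual compatibility of the per-node operations.
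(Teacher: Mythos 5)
Your sketch is essentially the paper's own route: this paper does not reprove Theorem~\ref{Thm:JointCRR} but imports it from \cite{obead17:_stron_coord_noisy_chann}, whose proof is precisely a channel-resolvability (soft-covering) analysis of the nested random codebooks of Section~II---superposition resolvability over the effective channel $P_{XY|AC}$ for \eqref{equ:JSRR1}--\eqref{equ:JSRR2}, resolvability conditioned on each common-randomness index $J=j$ toward $X$ for \eqref{equ:JSRR3}--\eqref{equ:JSRR4} (which licenses the likelihood-encoder inversion with $X^N$ supplied by nature), random channel coding for the cloud index for \eqref{equ:JSRR5}, and local-randomness accounting for \eqref{equ:JSRR6}--\eqref{equ:JSRR7}---and your sketch follows this same decomposition. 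The only point to tighten is that \eqref{equ:JSRR3}--\eqref{equ:JSRR4} serve to make $X^N$ nearly independent of $J$ in the allied scheme (so that the ideal conditional law of $(I,K)$ given $(X^N,J)$ can be used as the actual encoder when $X^N$ comes from nature), rather than merely making a \emph{reconstructed} $X^N$ close to $p_X^{\otimes N}$.
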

	The underlying proofs and details of the coding mechanism for this
        joint coordination-channel coding scheme for noisy strong
        coordination are based on a complex channel resolvability framework
        \cite{obead17:_stron_coord_noisy_chann}. Channel resolvability has
        been successfully used to study different strong coordination
        problems due to its ability to approximate channel output statistics with random codebooks \cite{han1993approximation}.	
	We now propose a scheme based on polar codes that achieves the inner bound stated by Theorem~\ref{Thm:JointCRR} for the strong coordination region as follows. 
	
	\section{Nested Polar Code for Strong Coordination over Noisy Channels}
	Since the proposed joint coordination-channel coding scheme is based
        on a channel resolvability framework, we adopt the channel
        resolvability-based polar construction for noise-free strong
        coordination \cite{CBJ16} in combination with polar coding for the
        degraded broadcast channel \cite{PC_BCGoela2015}.	
	\subsection{Coding Scheme}
	Consider the random variables $X,Y,A,B,C,\widehat{C}$ distributed according to
	$Q_{XYABC\widehat{C}}$ over
	$\mathcal{X}\times\mathcal{Y}\times\mathcal{A}\times\mathcal{B}\times\mathcal{C}$
	such that $X-(A,C)-(B,\widehat{C})-Y$. Assume that $|A|=2$ and the
	distribution $Q_{XY}$ is achievable with $|C|=2$ \footnote[2]{For the
		sake of exposition, we only focus on the set of joint
		distributions over $\mathcal{X}\times\mathcal{Y}$ that are
		achievable with binary auxiliary random variables $C, A$, and over
		a binary-input DMC. The scheme can be generalized to non-binary $C, A$ with non-binary polar codes in a straightforward way~\cite{STA09}.}. Let $N\triangleq 2^n$. We describe the polar coding scheme as follows:  
	
		\begin{figure}[htb]
		\centering
		{\includegraphics[scale=0.65]{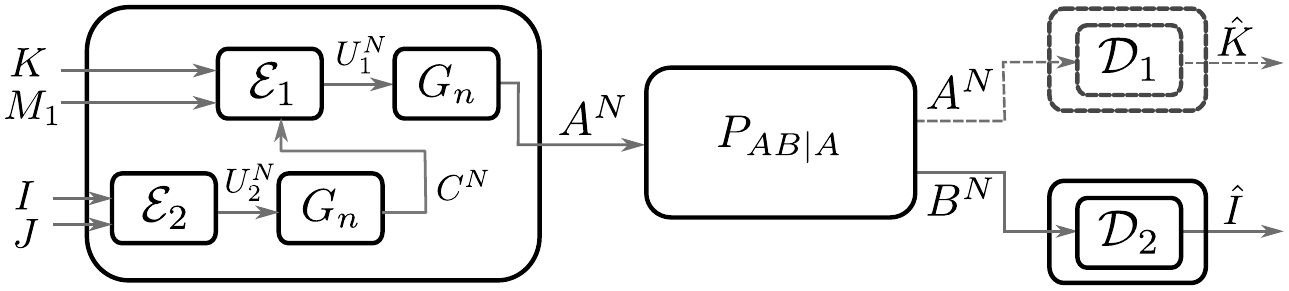}}
		\vspace{-1ex}
		\caption{Block diagram of the superposition polar code.}
		\vspace{-1ex}
		\label{fig:BroadcastChannelPC}
	\end{figure}

	Consider a 2-user \emph{physically} degraded discrete memoryless
        broadcast channel (DM-BC) $P_{AB|A}$ in
        Fig.~\ref{fig:BroadcastChannelPC} where $A$ denotes the channel
        input and $A,B$ denote the output to the first and second receiver,
        respectively. In particular, the channel DMC $P_{B|A}$ is physically
        degraded with respect to the \emph{perfect} channel $P_{A|A}$
        (i.e.,~ $P_{A|A} \succ P_{B|A}$), leading to the Markov chain $A-A-B$.
	We construct the nested polar coding scheme in a similar fashion as
	in \cite{PC_BCGoela2015} as this mimics the nesting of the codebooks
	$\mathscr{C}$ and $\mathscr{A}$ in Step {1)} of the random coding
	construction in~Section~\ref{sec:RandomCoding}. Here, the second
	(weaker) user is able to recover its intended message $I$, while the
	first (stronger)  user is able to recover both messages $K$ and
	$I$. Let $C$ be the auxiliary random variable (cloud center)
	required for  superposition coding over the DM-BC leading to the
	Markov chain $C-A-(A,B)$. As a result, the channel $P_{B|C}$ is also degraded with respect to $P_{A|C}$ (i.e.,~$P_{A|C} \succ P_{B|C}$) \cite[Lemma 3]{PC_BCGoela2015}.  Let $V$ be a matrix of the selected codewords $A^N$ and $C^N$ as
	\begin{equation}\label{eq:codewordsMtrx}
	V\triangleq \begin{bmatrix}
	A^{N}  \\
	C^{N}          
	\end{bmatrix}. 
	\end{equation}
	Now, apply the polar linear transformation $G_n$ as 
	\begin{equation}\label{eq:PolztnMtrx}
	U \triangleq \begin{bmatrix}
	U_{1}^{N}  \\
	U_{2}^{N}          
	\end{bmatrix} = V G_{n}. 
	\end{equation}
	
	First, consider $C^{N} \triangleq U_{2}^{N} G_{n}$ from \eqref{eq:codewordsMtrx} and \eqref{eq:PolztnMtrx} where $U_{2}^{N}$ is generated by the second encoder ${\cal E}_2$ in Fig.~\ref{fig:BroadcastChannelPC}. For $\beta<\frac{1}{2}$ and  ${\delta_N\triangleq 2^{-N^\beta}}$ we define the very high and high entropy sets
	\begin{equation}\label{eq:C_Sets}
	\hspace*{-0.75em}\vspace*{-1ex}\begin{aligned}	
	{\cal V}_{C} &\triangleq \{i\in \llbracket 1,N \rrbracket : H(U_{2}^i|U_{2}^{1:i-1})\!>\!1-\delta_{N}\},\\
	{\cal V}_{C|X} &\triangleq \{i\in \llbracket 1,N \rrbracket : H(U_{2}^i|U_{2}^{1:i-1}X^N)\!>\!1-\delta_{N}\},\\
	{\cal V}_{C|XY} &\triangleq \{i\in \llbracket 1,N \rrbracket : H(U_{2}^i|U_{2}^{1:i-1}X^NY^N)\!>\!1-\delta_{N}\},\\
	{\cal H}_{C|B} &\triangleq \{i\in \llbracket 1,N \rrbracket : H(U_{2}^i|U_{2}^{1:i-1}B^N)\!>\!\delta_{N}\},\\
	{\cal H}_{C|A} &\triangleq \{i\in \llbracket 1,N \rrbracket : H(U_{2}^i|U_{2}^{1:i-1}A^N)\!>\!\delta_{N}\},
	\end{aligned}
	\end{equation}
	which by \cite[Lemma 7]{PC_BCRameBloch16} satisfy
	\vspace*{-1ex}\begin{align*}
	\lim_{N\rightarrow \infty} \frac{|{\cal V}_{C}|}{N} &= H(C),\; & \lim_{N\rightarrow \infty} \frac{|{\cal V}_{C|X}|}{N} &= H(C|X),\\
	\lim_{N\rightarrow \infty} \frac{|{\cal V}_{C|XY}|}{N} &=H(C|XY),\; &\lim_{N\rightarrow \infty} \frac{|{\cal H}_{C|B}|}{N} &=H(C|B),\\
	\lim_{N\rightarrow \infty} \frac{|{\cal H}_{C|A}|}{N} &=H(C|A).
	\end{align*}
	
	These sets are illustrated in Fig.~\ref{fig:CoordiationSets1}. Note
	that the set ${\cal H}_{C|B}$ indicates the noisy bits of the DMC
	$P_{B|C}$ (i.e., the unrecoverable bits of the codeword $C^N$ intended
	for the weaker user in the DM-BC setup in Fig.~3) and is in general not aligned with other sets. Let
	\begin{align*}
	\vspace*{-1ex}{\cal L}_1& \triangleq {\cal V}_{C}\setminus {\cal H}_{C|A} , &&{\cal L}_2\triangleq{\cal V}_{C}\setminus {\cal H}_{C|B},\vspace*{-0.5ex}
	\end{align*}
	
	where the set ${\cal H}_{C|A}$ indicates the noisy bits of the DMC
	$P_{A|C}$ (i.e., the unrecoverable bits of the codeword $C^N$  intended
	for the stronger user). From the relation $P_{A|C} \succ P_{B|C}$ we
	obtain  ${\cal H}_{C|B}^{c} \subseteq {\cal H}_{C|A}^{c} $. This
	ensures that the polarization indices are guaranteed to be aligned
	(i.e.,~ $ {\cal L}_2 \subseteq {\cal L}_1$)
	\cite{PC_BCMondelli2014},\cite[Lemma 4]{PC_BCGoela2015}. As a
	consequence, the bits decodable by the weaker user are also decodable by the stronger user. 
	
	Now, consider $A^{N}\triangleq U_{1}^{N} G_n $ (see
	\eqref{eq:codewordsMtrx} and \eqref{eq:PolztnMtrx}), where
	$U_{1}^{N}$ is generated by the first encoder ${\cal E}_1$ with
	$C^N$ as a side information as seen in
	Fig.~\ref{fig:BroadcastChannelPC}. We define the very high entropy
	sets illustrated in Fig.~\ref{fig:CoordiationSets2} as
	\vspace*{-4ex}
	\begin{figure}[h]
			\centering
			{\includegraphics[scale=0.635]{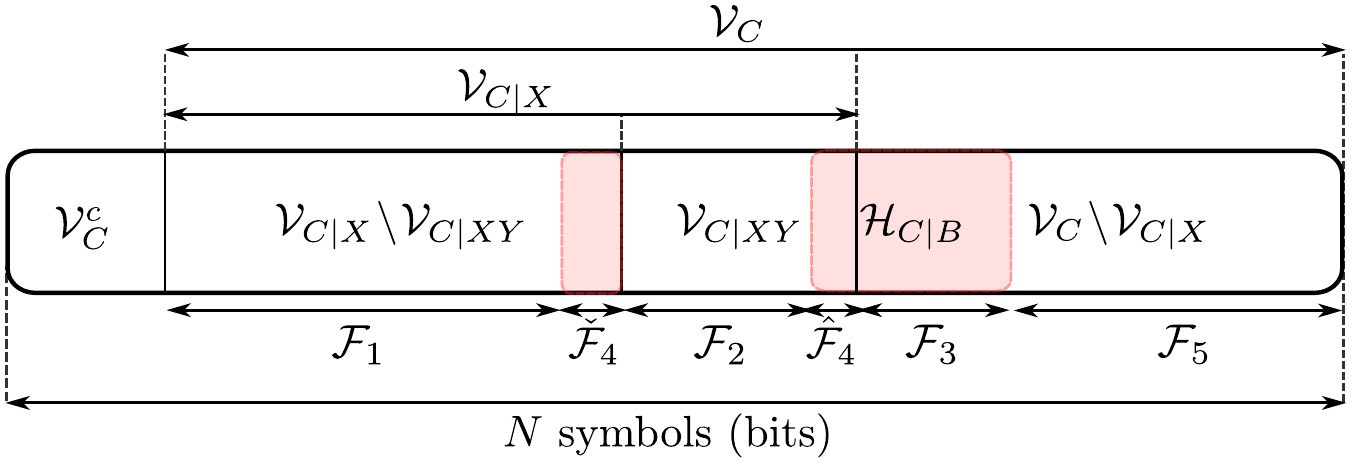}}
			\vspace*{-3.5ex}
			\caption{Index sets for codeword $C$.}
			\vspace*{-4ex}
			\label{fig:CoordiationSets1}
		\end{figure}
		\begin{figure}[h!]
		\centering
		{\includegraphics[scale=0.635]{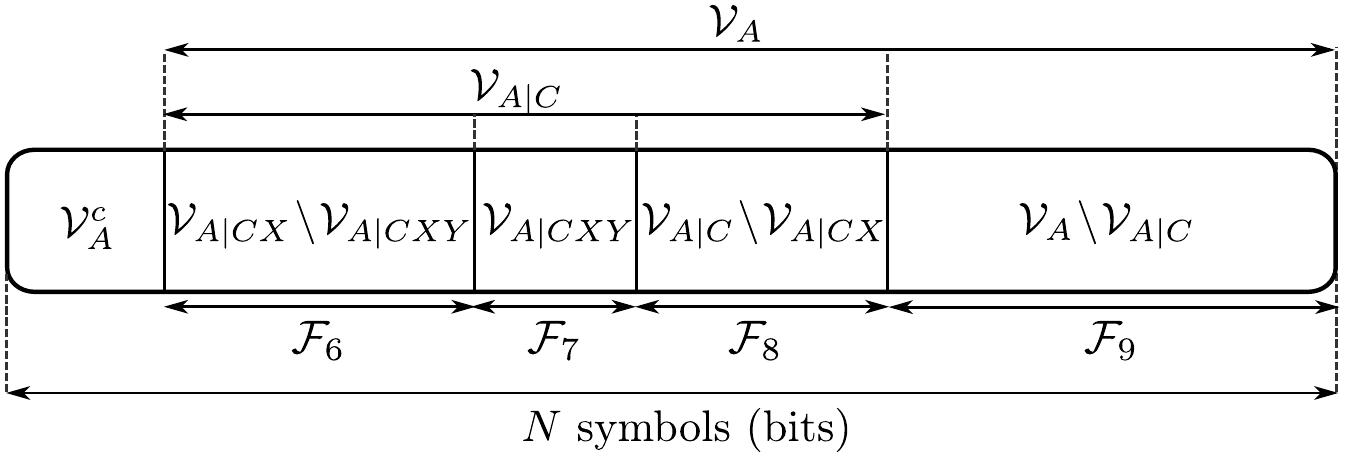}}
		\vspace*{-4ex}
		\caption{Index sets for codeword $A$.}
		\label{fig:CoordiationSets2}
	\end{figure}

	\vspace{-0.5em}\begin{equation}\label{eq:A_Sets}
	\begin{aligned}
	{\cal V}_{A} &\triangleq \{i\in \llbracket 1,N \rrbracket : H(U_{1}^i|U_{1}^{1:i-1})\!>\! 1\!-\!\delta_N\!\},\\
	{\cal V}_{A|C} &\triangleq \{i\in \llbracket 1,N \rrbracket : H(U_{1}^i|U_{1}^{1:i-1}C^N)\!>\! 1\!-\!\delta_N \!\},\\
	{\cal V}_{A|CX} &\triangleq \{i\in \llbracket 1,N \rrbracket : H(U_{1}^i|U_{1}^{1:i-1}C^NX^N)\!>\! 1\!-\!\delta_N\!\},\\
	{\cal V}_{A|CXY} &\triangleq \{i\in \llbracket 1,N \rrbracket : H(U_{1}^i|U_{1}^{1:i-1}C^NX^NY^N)\!>\! 1\!-\!\delta_N\!\}
	\end{aligned}
	\end{equation}
	\noindent	
	satisfying
	\vspace{-0.5em}\begin{align*}
	\lim_{N\rightarrow \infty} \frac{|{\cal V}_{A}|}{N} &= H(A),& \lim_{N\rightarrow \infty} \frac{|{\cal V}_{A|CX}|}{N} &=H(A|CX),\\
	\lim_{N\rightarrow \infty} \frac{|{\cal V}_{A|C}|}{N} &= H(A|C), &\lim_{N\rightarrow \infty} \frac{|{\cal V}_{A|CXY}|}{N} &=H(A|CXY).
	\end{align*}
		
	Note that, in contrast to Fig.~\ref{fig:CoordiationSets1}, here
	there is no channel dependent set overlapping with all other sets as
	$P_{A|A}$ is a noiseless  channel with rate $H(A)$ and hence ${\cal H}_{A|A}=\emptyset$.
	
	Accordingly, in terms of the polarization sets in \eqref{eq:C_Sets} and \eqref{eq:A_Sets} we define the sets combining channel resolvability for strong coordination and broadcast channel construction 
	\begin{align*}
	{\cal F}_1&\triangleq ({\cal V}_{C|X}\setminus {\cal V}_{C|XY})\cap {\cal H}_{C|B}^{c}, \\
	{\cal F}_2 &\triangleq {\cal V}_{C|XY} \cap {\cal H}_{C|B}^{c},\\
	{\cal F}_3 &\triangleq  {\cal V}_{C|X}^c \cap {\cal H}_{C|B} ={\cal H}_{C|B}\setminus {\cal H}_{C|BX},\\
	{\cal F}_4&\triangleq {\cal V}_{C|X} \cap {\cal H}_{C|B} = {\cal H}_{C|BX},\\
	\hat{\cal F}_4 &\triangleq {\cal H}_{C|BXY},\\
	\check{\cal F}_4 &\triangleq {\cal H}_{C|BX}\!\setminus\!{\cal H}_{C|BXY},\\
	{\cal F}_5&\triangleq ({\cal V}_{C}\setminus {\cal V}_{C|X})\cap {\cal H}_{C|B}^{c},\\
	{\cal F}_6&\triangleq {\cal V}_{A|CX}\setminus {\cal V}_{A|CXY},\\
	{\cal F}_7&\triangleq {\cal V}_{A|CXY},\\
	{\cal F}_8&\triangleq {\cal V}_{A|C}\setminus {\cal V}_{A|CX},\\
	{\cal F}_9&\triangleq {\cal V}_{A}\setminus {\cal V}_{A|C}.
	\end{align*}

	Finally, with $Y^{N}\triangleq T^{N} G_n $, we define the very high entropy set:
	\begin{equation}\label{eq:Y_Set}
	{\cal V}_{Y|BC} \triangleq \{i\in \llbracket 1,N \rrbracket: H(T^i|T^{1:i-1}B^NC^N)\!>\! \log|\mathcal{Y}|-\delta_N\},
	\end{equation}
	satisfying
	\vspace{-0.5ex}\begin{equation*} 
	\lim_{N\rightarrow \infty} \frac{|{\cal V}_{Y|BC}|}{N} = H(Y|BC). 
	\end{equation*}
	
	This set is useful for expressing the randomized generation of $Y^N$ via simulating the channel $P_{Y|BC}$ in Fig.~\ref{fig:StrongCoordinationAllied} as a source polarization operation \cite{CBJ16}. We now proceed to describe the encoding and decoding algorithms. 
	
	\subsubsection{Encoding}
	The encoding protocol described in Algorithm 1 is performed over $k
	\in \mathbb{N}$ blocks of length $N$. Since for strong coordination
	the goal is to approximate a target joint distribution with a minimum amount of randomness, the encoding scheme performs channel resolvability while reusing a fraction of the common randomness over several blocks (i.e.,  
	randomness recycling) as in \cite{CBJ16}. However, since the communication is over a noisy channel, the encoding scheme also considers a block chaining construction to mitigate the channel noise influence as in \cite{PC_BCMondelli2014,PC_BCRameBloch16,PCempirical2016,GKLJB16}.
	
	\begin{figure}[t]
		\centering
		{\includegraphics[scale=0.49]{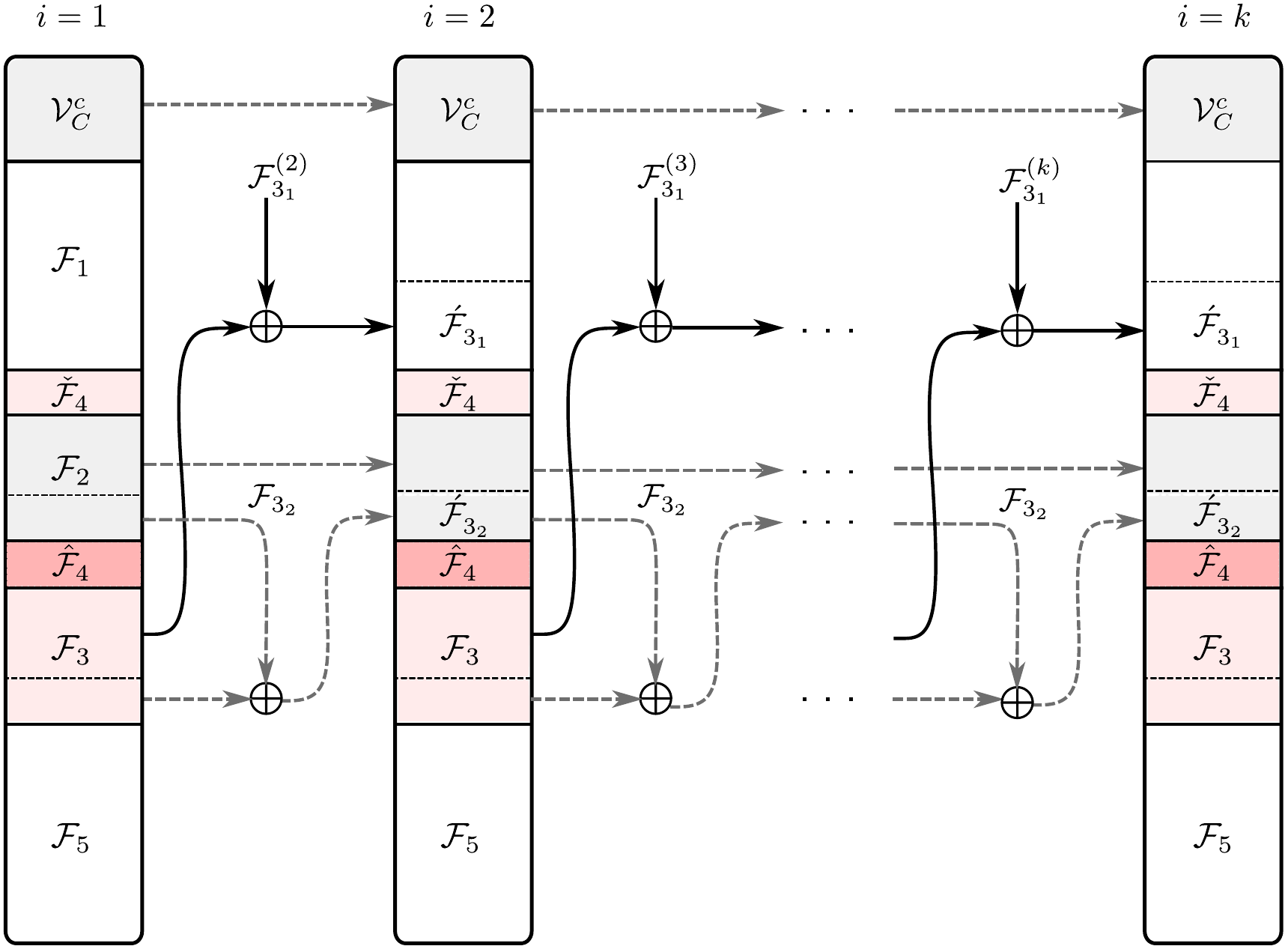}}
		\caption{Chaining construction for block encoding.}
		\vspace*{-2ex}
		\label{fig:ChanningConstruction}
	\end{figure}

	More precisely, as demonstrated in Fig.~\ref{fig:StrongCoordinationAllied}, we are interested
	in successfully recovering the message ${I}$ that is intended for the weak user channel given by
	$P_{B|A}$ in Fig.~\ref{fig:BroadcastChannelPC}. However, the challenge is to communicate the set ${\cal F}_3$ that includes bits of the message $I$ that are corrupted by the channel
	noise. This suggests that we apply a variation of block chaining
        only at  encoder ${\cal E}_2$ generating the codeword $C^N$ as
        follows (see  Fig.~\ref{fig:ChanningConstruction}).
	At encoder ${\cal E}_2$, the set ${\cal F}_3$
	of block $i\in \llbracket 1,k \rrbracket$ is embedded in the reliably decodable bits of ${\cal F}_1
	\cup {\cal F}_2$ of the following block $i+1$. This is possible by
	following the  decodability constraint (see \eqref{equ:JSRR4},
	\eqref{equ:JSRR5} of Theorem 1) that ensures that the size of the set
	${\cal F}_3$ is smaller than the combined size of the sets ${\cal F}_1$
	and ${\cal F}_2$ \cite{PCempirical2016}. However, since
	these sets originally contain uniformly distributed
	common randomness  $J$ \cite{CBJ16}, the bits of ${\cal F}_3$ can be
        embedded while maintaining the uniformity of the randomness  by taking advantage of the Crypto Lemma \cite[Lemma
	2]{Forney2004role}. Then, to ensure that ${\cal F}_3$ is equally distributed over ${\cal F}_1 \cup {\cal F}_2$, ${\cal
		F}_3$ is partitioned according to the ratio between $|{\cal F}_1|$ and $|{\cal
		F}_2|$. To utilize the Crypto Lemma, we introduce $\mathcal{F}_{3_2}$ and
	$\mathcal{F}_{3_1}^{(i)}$, which represent uniformly distributed common 
	randomness used to randomize the information bits of
	$\mathcal{F}_3$. The difference is that $\mathcal{F}_{3_2},$ as ${\cal F}_2,$ represents a fraction of common randomness that can
	be reused over $k$ blocks whereas a realization of the randomness
	in $\mathcal{F}_{3_{1}}^{(i)}$ needs to be
	provided in each new block. Note that, as visualized in Fig.~6, both
        the
        subsets $\acute{\cal F}_{3_1}\subset {\cal F}_1$ and $\acute{\cal
          F}_{3_2} \subset {\cal F}_2$ represent the resulting uniformly distributed bits of ${\cal F}_{3}$ of the previous block, where $|\acute{\cal F}_{3_1}|=|{\cal F}_{3_1}^{}|$ and $|\acute{\cal F}_{3_2}|=|{\cal F}_{3_2}|.$
	Finally, in an additional block $k+1$ we use a good channel  code to reliably transmit the set ${\cal F}_3$ of the last block $k$.

	
	\begin{table}[h!]
		\normalsize
		\begin{tabular}{p{0.95\linewidth}}
			\specialrule{.1em}{.05em}{.05em} 
			\rule{0pt}{2.5ex} 
			\hspace{-0.5em}\noindent\textbf{Algorithm 1:} Encoding algorithm at Node $\mathsf X$ for strong coordination\\
			\specialrule{.1em}{.05em}{.05em} \specialrule{.1em}{.05em}{.05em} 
			\rule{0pt}{2.5ex}  
			\hspace{-0.35em}\textbf{Input:} $X_{1:k}^N$, uniformly distributed local randomness bits $M_{1_{1:k}}$ of the size $k|{\cal F}_6|$, common randomness bits $\bar{J}=(\bar{J}_1,\bar{J}_2)$ of sizes $|{\cal F}_2\cup \hat{\cal F}_4|$, and $|{\cal F}_7|$, respectively, and $J_{1:k}$ of size $k|\check{\cal F}_4\cup {\cal F}_1|$ shared with Node $\mathsf Y$.
			
			\textbf{Output:} $\widetilde{A}_{1:k}^N$\\
			1.~\textbf{for} $i=2,\dots,k$ \textbf{do}\\
			2.~${\cal E}_2$ in Fig.~\ref{fig:BroadcastChannelPC} constructs $\widetilde{U}_{2_i}^N$ bit-by-bit as follows:\\
			~~\textbf{if} $i=1$ \textbf{then} 
			\begin{itemize}
				\item $\widetilde{U}_{2_i}^N[{\cal F}_1 \cup \check{\cal F}_4] \leftarrow {J}_{i}$ 
				\item $\widetilde{U}_{2_i}^N[{\cal F}_2 \cup \hat{\cal F}_4 ] \leftarrow \bar{J}_1$
			\end{itemize}
			~~\textbf{else}
			\begin{itemize}
				\item Let ${\cal F}_{3_1}^{(i)}$, ${\cal F}_{3_2}$ be sets of the size $(|{\cal F}_m|\times|{\cal F}_3|)/(|{\cal F}_1|+ |{\cal F}_2|)$ for $m\in\{ 1,2\}.$
				\item $\big(\widetilde{U}_{2_i}^N[({\cal F}_1\setminus\acute{\cal F}_{3_1}) \cup \check{\cal F}_4], {\cal F}_{3_1}^{(i)}\big) \leftarrow {J}_{i}$ 
				\item $\big(\widetilde{U}_{2_i}^N[({\cal F}_2\setminus\acute{\cal F}_{3_2}) \cup \hat{\cal F}_4 ],{\cal F}_{3_2} \big) \leftarrow \bar{J}_1$
				\item $\widetilde{U}_{2_i}^N[\acute{\cal F}_{3_1} ] \leftarrow \widetilde{U}_{2_{i-1}}^N[{\cal F}_3\setminus {\cal F}_{3_2}] \oplus {\cal F}_{3_1}^{(i)} $
				\item $\widetilde{U}_{2_i}^N[\acute{\cal F}_{3_2}] \leftarrow \widetilde{U}_{2_{i-1}}^N[{\cal F}_3\setminus {\cal F}_{3_1}] \oplus {\cal F}_{3_2} $
			\end{itemize}
			~~\textbf{end}
			\begin{itemize}
			\item Given $X_i^N$, successively draw the remaining components of $\widetilde{U}_{2_i}^N$  according to $\tilde{P}_{U_{2_i}^j|U_{2_i}^{1:j-1}X_i^N}$ defined by
			\begin{equation}\label{eq:MsgEncC}
				\hspace*{-1.7ex} \begin{aligned}
			\tilde{P}_{U_{2_i}^j|U_{2_i}^{1:j-1}X_i^N} \triangleq  \begin{cases}
			{Q}_{U_{2}^j|U_{2}^{1:j-1}} & \!j \in  {\cal V}_{C}^{c}, \\
			{Q}_{U_{2}^j|U_{2}^{1:j-1}X^N} & \!j \in {\cal F}_3 \cup {\cal F}_5.
			\end{cases}
			\end{aligned}
			\end{equation} 
		    \end{itemize}
			3.~$\widetilde{C}_{i}^{N} \leftarrow \widetilde{U}_{2_i}^{N} G_{n} $ \\
			4.~${\cal E}_1$ in Fig.~\ref{fig:BroadcastChannelPC} constructs $\widetilde{U}_{1_i}^N$ bit-by-bit as follows: 
			\begin{itemize}
				\item $\widetilde{U}_{1_i}^N[{\cal F}_6] \leftarrow {M}_{1_i}$
				\item $\widetilde{U}_{1_i}^N[{\cal F}_7] \leftarrow \bar{J}_2$
				\item Given $X_i^N$ and $ \widetilde{C}_{i}^{N}$, successively draw the remaining components of $\widetilde{U}_{1_i}^N$ according to $\tilde{P}_{U_{1_i}^j|U_{1_i}^{1:j-1}C_i^NX_i^N}$ defined by
				\vspace{-1.5ex}\begin{equation}\label{eq:MsgEncA}
				\hspace*{-2.52ex} \begin{aligned}
				\tilde{P}_{U_{1_i}^j|U_{1_i}^{1:j-1}C_i^NX_i^N} \triangleq  \begin{cases}
				{Q}_{U_{1}^j|U_{1}^{1:j-1}} & \!j \in  {\cal V}_{A}^{c}, \\
				{Q}_{U_{1}^j|U_{1}^{1:j-1}C^N} & \!j \in {\cal F}_9, \\
				{Q}_{U_{1}^j|U_{1}^{1:j-1}C^NX^N} & \!j \in {\cal F}_8.
				\end{cases}
				\end{aligned}
				\end{equation}
			\end{itemize}
			5.~$\widetilde{A}_i^{N} \leftarrow \widetilde{U}_{1_i}^{N} G_{n} $\\
			6.~Transmit $\widetilde{A}_{i}^N$\\
			7.~\textbf{end for}\\
			\specialrule{.1em}{.05em}{.05em}
		\end{tabular}
	\end{table}
	\vspace*{-1ex}
	
	\subsubsection{Decoding}
	The decoder is described in Algorithm 2.
	Recall that we are only interested
		in the message $\hat{I}$ intended for the weak user channel given by
		$P_{B|A}$ in Figure~\ref{fig:BroadcastChannelPC}. As a result, we
		only state the decoding protocol at ${\cal D}_2$ that recovers the codeword $\widehat{C}^{N}.$ Note that the
	decoding is done in reverse order after receiving the extra $k+1$
	block containing the bits of set ${\cal F}_3$ of the last block
	$k$. In particular, in each block $i\in[1,k-1]$ the bits in
	$\mathcal{F}_3$ are obtained by successfully recovering the bits in both
	$\mathcal{F}_1$ and $\mathcal{F}_2$ in block $i+1$.
	
	\begin{table}[t!]
		\normalsize
		\begin{tabular}{p{0.95\linewidth}}
			\specialrule{.1em}{.05em}{.05em} 
			\rule{0pt}{2.5ex} 
			\hspace{-0.5em}\noindent\textbf{Algorithm 2:} Decoding algorithm at Node $\mathsf Y$ for strong coordination\\
			\specialrule{.1em}{.05em}{.05em}	\specialrule{.1em}{.05em}{.05em}
			\rule{0pt}{2.5ex}  
			\hspace{-0.35em}\textbf{Input:} $B_{1:k}^N,$ uniformly distributed common randomness, $\bar{J}_1$, and $J_{1:k}$ shared with Node $\mathsf X$.\\ 
		
			\textbf{Output:} $\widetilde{Y}_{1:k}^N$\\
			1.~\textbf{For} block $i=k,\dots,1$ \textbf{do}\\ 
			2.~${\cal D}_2$ in Fig.~\ref{fig:BroadcastChannelPC} constructs $\widehat{U}_{2_i}^N$ bit-by-bit as follows:
			\begin{itemize}
				\item $\big(\widehat{U}_{2_i}^N[({\cal F}_1\setminus\acute{\cal F}_{3_1}^{}) \cup \check{\cal F}_4], {\cal F}_{3_1}^{(i)}\big) \leftarrow {J}_{i}$ 
				\item $\big(\widehat{U}_{2_i}^N[({\cal F}_2\setminus\acute{\cal F}_{3_2}) \cup \hat{\cal F}_4 ],{\cal F}_{3_2}\big) \leftarrow \bar{J}_1$
				\item Given $B_i^N$ successively draw the components of  $\widehat{U}_{2_i}^N$ according to 
				$\tilde{P}_{U_{2_i}^j|U_{2_i}^{j-1},B_i^N}$ defined by
				\begin{equation}\label{eq:MsgDec}
				\hspace{-5ex}\begin{aligned}
				\tilde{P}_{U_{2_i}^j|U_{2_i}^{j-1}B_i^N}\triangleq\begin{cases}
				{Q}_{U_{2}^j|U_{2}^{1:j-1}}\! & \!j\! \in  {\cal V}_{C}^{c}, \\
				{Q}_{U_{2}^j|U_{2}^{1:j-1}B_i^N}\!\! & \!j\! \in \acute{\cal F}_{3_2}\!\cup\!\acute{\cal F}_{3_1}\!\cup\!{\cal F}_5.
				\end{cases}
				\end{aligned}\hspace{-1ex}
				\end{equation}
			\end{itemize}
			3.~\textbf{if} $i=k$ \textbf{then}
			\begin{itemize}
				\item $\widehat{U}_{2_i}^N[{\cal F}_3]\leftarrow B_{k+1}^N $
			\end{itemize}
				~~\textbf{else}
			\begin{itemize}
				\item $ \widehat{U}_{2_{i}}^N[{\cal F}_3\setminus {\cal F}_{3_2}]  \leftarrow \widehat{U}_{2_{i+1}}^N[\acute{\cal F}_{3_1} ]\oplus {\cal F}_{3_1}^{(i+1)} $
				\item $\widehat{U}_{2_{i}}^N[{\cal F}_3\setminus {\cal F}_{3_1}]   \leftarrow  \widehat{U}_{2_{i+1}}^N[\acute{\cal F}_{3_2}]\oplus {\cal F}_{3_2} $
			\end{itemize}
			4.~Let \begin{itemize}
				\item $\widehat{U}_{2_i}^N[\acute{\cal F}_{3_1}^{} ] \leftarrow {\cal F}_{3_1}^{(i)}$ 
				\item $\widehat{U}_{2_i}^N[\acute{\cal F}_{3_2}] \leftarrow {\cal F}_{3_2} $
				\end{itemize}	
			5.~$\widehat{C}_{i}^{N} \leftarrow \widehat{U}_{2_i}^{N} G_{n} $ \\
			6.~Channel simulation: given $\widehat{C}_{i}^N$ and ${B_i}^N,$ successively draw the components of $\widetilde{T}_{i}^{N}$ according to 
			\begin{equation}\label{eq:seqGen}
			\hspace{-1.16ex}\begin{aligned}
			\tilde{P}_{T_{i}^j|T_{i}^{1:j-1}B_i^NC_i^N} \triangleq  \begin{cases}
			1/|{\cal Y}| & \!j \in {\cal V}_{Y|BC}, \\
			{Q}_{T_{}^j|T_{}^{1:j-1}B^NC^N} & \!j \in {\cal V}_{Y|BC}^{c}.
			\end{cases}
			\end{aligned}
			\end{equation}
			5.~$\widetilde{Y}_i^N \leftarrow \widetilde{T}_i^N G_n$\\
			6.~\textbf{end for}\\
			\specialrule{.1em}{.05em}{.05em} 
		\end{tabular}
	\vspace*{-1ex}
	\end{table}
	\vspace*{-1ex}

	\subsection{Scheme Analysis}
	We now provide an analysis of the coding scheme of Section III. The analysis is based on KL divergence which upper bounds the total variation in \eqref{Thm:JointCRR} by Pinsker's inequality. We start the analysis with a set of sequential lemmas. In particular, Lemma 1 is useful to show in Lemma 2 that the strong coordination scheme based on channel resolvability holds for each block individually regardless of the randomness recycling. 

	\begin{lemma}\label{lemma1}
		For block $i \in \llbracket 1,k \rrbracket ,$ we have
		\vspace{-0.8ex}\begin{equation*}
		\mathbb{D}(Q_{A^NC^NX^N}||\tilde{P}_{A_i^NC_i^NX_i^N}) \leq 2N\delta_{N}.
		\end{equation*} 
	\end{lemma}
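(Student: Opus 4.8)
The plan is to pass to the polarized domain, split the divergence by the chain rule into a contribution from the cloud-center codeword $C^N$ and one from the satellite codeword $A^N$, and then bound each contribution one polarized coordinate at a time using the entropy thresholds that define the sets in \eqref{eq:C_Sets} and \eqref{eq:A_Sets}. The argument does not depend on the block index, so fix $i$ and abbreviate the block-$i$ quantities by $U_1^N,U_2^N,A^N,C^N,X^N$. Since $G_n$ is invertible over $\mathbb{F}_2$, the map $(A^N,C^N)\mapsto(U_1^N,U_2^N)$ is a bijection, so $\mathbb{D}(Q_{A^NC^NX^N}\|\tilde P_{A_i^NC_i^NX_i^N})=\mathbb{D}(Q_{U_1^NU_2^NX^N}\|\tilde P_{U_{1_i}^NU_{2_i}^NX_i^N})$, where $U_1\leftrightarrow A$, $U_2\leftrightarrow C$, and $Q_{U_1^NU_2^NX^N}$ is $Q_{ACX}^{\otimes N}$ written in the polarized coordinates. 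Because $X^N$ is supplied i.i.d.\ with the target marginal $q_X$, the two laws have the same $X^N$-marginal; factoring both in the order $X^N\to U_2^N\to U_1^N$ and applying the chain rule for KL divergence gives $\mathbb{D}(Q_{A^NC^NX^N}\|\tilde P_{A_i^NC_i^NX_i^N})=D_C+D_A$ with $D_C=\mathbb{E}_Q[\mathbb{D}(Q_{U_2^N|X^N}\|\tilde P_{U_{2_i}^N|X_i^N})]$ and $D_A=\mathbb{E}_Q[\mathbb{D}(Q_{U_1^N|U_2^NX^N}\|\tilde P_{U_{1_i}^N|U_{2_i}^NX_i^N})]$. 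It therefore suffices to establish $D_C\le N\delta_N$ and $D_A\le N\delta_N$.

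For $D_C$ I expand by the chain rule over the $N$ polarized coordinates and classify each index $j$ according to the generation rule of Algorithm~1 and \eqref{eq:MsgEncC}. On ${\cal F}_1\cup{\cal F}_2\cup{\cal F}_4={\cal V}_{C|X}$ the encoder writes uniformly distributed common-randomness bits (uniformity is preserved by the Crypto-Lemma re-randomization used in the chaining), so, since $|\mathcal{C}|=2$, the elementary identity $\mathbb{E}_Q[\mathbb{D}(Q_{W|V}\|1/2)]=1-H(W|V)$ shows the $j$-th term equals $1-H(U_2^j|U_2^{1:j-1}X^N)<\delta_N$ by the very-high-entropy threshold defining ${\cal V}_{C|X}$. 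On ${\cal F}_3\cup{\cal F}_5$ the encoder draws exactly from $Q_{U_2^j|U_2^{1:j-1}X^N}$, contributing $0$. The remaining indices lie in ${\cal V}_C^c$ and are regenerated from $Q_{U_2^j|U_2^{1:j-1}}$, so the $j$-th term is $H(U_2^j|U_2^{1:j-1})-H(U_2^j|U_2^{1:j-1}X^N)\le H(U_2^j|U_2^{1:j-1})$; because the medium-entropy indices have been absorbed into ${\cal H}_{C|B}$ (hence into ${\cal F}_3$) via the alignment ${\cal H}_{C|B}^c\subseteq{\cal H}_{C|A}^c$ noted after \eqref{eq:C_Sets}, only the genuinely low-entropy part of ${\cal V}_C^c$ survives in this bucket, and each such index contributes at most $\delta_N$. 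The three buckets partition $\llbracket 1,N\rrbracket$, so summing yields $D_C< N\delta_N$.

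The bound $D_A\le N\delta_N$ follows in exactly the same way from Algorithm~1, step~4, and \eqref{eq:MsgEncA}, now with the four buckets ${\cal F}_6\cup{\cal F}_7={\cal V}_{A|CX}$ (uniform bits, contribution $1-H(U_1^j|U_1^{1:j-1}C^NX^N)<\delta_N$ each), ${\cal F}_8$ (drawn from $Q_{U_1^j|U_1^{1:j-1}C^NX^N}$, contribution $0$), ${\cal F}_9$ (drawn from $Q_{U_1^j|U_1^{1:j-1}C^N}$, contribution $H(U_1^j|U_1^{1:j-1}C^N)-H(U_1^j|U_1^{1:j-1}C^NX^N)$), and ${\cal V}_A^c$ (regenerated from $Q_{U_1^j|U_1^{1:j-1}}$). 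Since ${\cal H}_{A|A}=\emptyset$ (the $P_{A|A}$ branch is noiseless), these four sets partition $\llbracket 1,N\rrbracket$, and on the last two buckets the side information dropped by the encoder is worth at most $\delta_N$ bits per index, so $D_A< N\delta_N$. Adding the two estimates gives $\mathbb{D}(Q_{A^NC^NX^N}\|\tilde P_{A_i^NC_i^NX_i^N})< 2N\delta_N$, as claimed.

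The only non-routine step is the estimate for the ``regenerated'' buckets --- the ${\cal V}_C^c$-part of $D_C$ and the ${\cal F}_9$- and ${\cal V}_A^c$-parts of $D_A$ --- i.e.\ showing that generating a coordinate from a conditional that omits part of the side information ($X^N$, or $C^NX^N$) costs only $O(\delta_N)$ rather than $O(1)$ per index. This is precisely where the set bookkeeping does the work: one uses the inclusions ${\cal V}_{C|XY}\subseteq{\cal V}_{C|X}\subseteq{\cal V}_C$ and ${\cal V}_{A|CXY}\subseteq{\cal V}_{A|CX}\subseteq{\cal V}_{A|C}\subseteq{\cal V}_A$, together with the degradedness-induced alignments ${\cal H}_{C|B}^c\subseteq{\cal H}_{C|A}^c$ and ${\cal H}_{A|A}=\emptyset$, to ensure that the indices left in those buckets are exactly the ones on which the omitted conditioning is negligible. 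Everything else is the chain rule and the above $1-H$ identity for the binary uniform-bit coordinates.
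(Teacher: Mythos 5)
Your proposal follows the paper's own route step for step: invertibility of $G_n$, the KL chain rule splitting the divergence into the $C$-codeword and $A$-codeword contributions, dropping the indices drawn from the exact conditionals in \eqref{eq:MsgEncC}--\eqref{eq:MsgEncA} (zero contribution), the $1-H(\cdot)$ identity for the uniformly loaded indices of ${\cal V}_{C|X}$ and ${\cal V}_{A|CX}$, and a per-index $\delta_N$ bound on what remains, yielding the same count $2N\delta_N$ as the paper's chain (a)--(i).

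The one place you depart from the paper is the justification you supply for the ``regenerated'' buckets, and that justification does not hold. The inclusion ${\cal H}_{C|B}^{c}\subseteq{\cal H}_{C|A}^{c}$ compares two channel-conditioned sets and says nothing about the unconditional entropies $H(U_2^j|U_2^{1:j-1})$; since conditioning on $B^N$ only lowers entropy, the set that actually survives in your third bucket, ${\cal V}_C^{c}\cap{\cal H}_{C|B}^{c}$, can contain indices whose unconditional entropy is far from $0$ even though their entropy given $B^N$ is below $\delta_N$, and for such an index the cost $H(U_2^j|U_2^{1:j-1})-H(U_2^j|U_2^{1:j-1}X^N)$ is not bounded by $\delta_N$. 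So the claim that the medium-entropy indices are ``absorbed into ${\cal H}_{C|B}$, hence ${\cal F}_3$'' is false in general. Likewise, $j\in{\cal F}_9={\cal V}_A\setminus{\cal V}_{A|C}$ only gives $H(U_1^j|U_1^{1:j-1}C^N)\le 1-\delta_N$, so ``the dropped side information is worth at most $\delta_N$ bits per index'' does not follow there (and ${\cal H}_{A|A}=\emptyset$ is irrelevant to it; it is also not needed for the partition, which follows from the nesting of the ${\cal V}$ sets alone). To be fair, the paper's step (i) asserts exactly the same per-index $\delta_N$ bound on the sums over ${\cal V}_C^{c}$ and ${\cal V}_{A|C}^{c}$ without further argument, implicitly treating every index outside the very-high-entropy sets as essentially deterministic and neglecting the vanishing --- but much larger than $N\delta_N$ --- fraction of unpolarized indices. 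Your bottom line therefore coincides with the paper's, but the degradedness/absorption argument you interpose is not valid and should be removed: either state the idealization explicitly, or repair the step by having the unpolarized indices drawn from the full conditionals (zero contribution) or by accounting for their $o(N)$ total separately.
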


	\begin{proof}
		We have
		\hspace{-1em}\begin{equation*}
		\begin{split} 
		\hspace*{-0.9em}&\mathbb{D}({Q}_{A^NC^NX^N}||\tilde{P}_{A_i^NC_i^NX_i^N})\\ 
		&\stackrel{(a)}{=} \mathbb{D}({Q}_{U_{1}^NU_{2}^NX^N}||\tilde{P}_{U_{1_i}^NU_{2_i}^NX_i^N}) \\
		&\stackrel{(b)}{=} \mathbb E_{Q_{X^N}}\Big[ \mathbb{D}({Q}_{U_{1}^NU_{2}^N|X^N}||\tilde{P}_{U_{1_i}^NU_{2_i}^N|X_i^N}) \Big] \\
	 	&\stackrel{}{=} \mathbb E_{Q_{X^N}}\Big[ \mathbb{D} ({Q}_{U_{2}^N|X^N}{Q}_{U_{1}^N|U_{2}^NX^N}||\tilde{P}_{U_{2_i}^N|X_i^N}\tilde{P}_{U_{1_i}^N|U_{2_i}^NX_i^N}) \Big] \\
		&\stackrel{(c)}{=} \mathbb E_{Q_{X^N}}\Big[ \mathbb{D}({Q}_{U_{2}^N|X^N}||\tilde{P}_{U_{2_i}^N|X_i^N})\\ 		  
		&\qquad\quad\quad \;\;+\mathbb{D}({Q}_{U_{1}^N|U_{2}^NX^N}||\tilde{P}_{U_{1_i}^N|U_{2_i}^NX_i^N}) \Big] \\
		&\stackrel{(d)}{=} \sum_{j=1}^{N} \mathbb E_{Q_{U_{2}^{1:j-1}X^N}}\Big[ \mathbb{D} ({Q}_{U_{2}^j|U_{2}^{1:j-1}X^N}||\tilde{P}_{U_{2_i}^j|U_{2_i}^{1:j-1}X_i^N}) \Big]\\
		&+\! \sum_{j=1}^{N} \mathbb E_{Q_{U_{1}^{1:j-1}\!U_{2}^{N}\!X^N}}\!\Big[ \mathbb{D} ({Q}_{U_{1}^j\!|U_{1}^{1:j-1}\!U_{2}^N\!X^N}\!||\tilde{P}_{U_{1_i}^j\!|U_{1_i}^{1:j-1}\!U_{2_i}^N\!X_i^N}) \Big] 
	\end{split}
                \end{equation*}
                \vspace*{-0.9em}
                \begin{equation*}
		\begin{split} 	
		\hspace*{-0.9em}
		&\stackrel{(e)}{=}\!\!\sum_{j \notin {\cal F}_3 \cup {\cal
                    F}_5} \mathbb E_{Q_{U_{2}^{1:j-1}X^N}}\Big[ \mathbb{D}
                ({Q}_{U_{2}^j|U_{2}^{1:j-1}X^N}||\tilde{P}_{U_{2_i}^j|U_{2_i}^{1:j-1}X_i^N})
                \Big] \\
               	&\!+\!\!\!\sum_{j \notin {\cal F}_8}\!\!\mathbb E_{Q_{U_{1}^{1:j-1}\!U_{2}^{N}\!X^N}}\!\Big[ \mathbb{D} ({Q}_{U_{1}^j|U_{1}^{1:j-1}\!U_{2}^N\!X^N}||\tilde{P}_{U_{1_i}^j|U_{1_i}^{1:j-1}\!U_{2_i}^N\!X_i^N}) \Big]\\
		&\stackrel{(f)}{=} \!\!\sum_{j \in {\cal V}_{C}^{c}\cup{\cal V}_{C|X}} \!\!\!\!\mathbb E_{Q_{U_{2}^{1:j-1}X^N}} \Big[ \mathbb{D} ({Q}_{U_{2}^j|U_{2}^{1:j-1}X^N}||\tilde{P}_{U_{2_i}^j|U_{2_i}^{1:j-1}X_i^N}) \Big] \\
		&+\sum_{j \in {\cal V}_{A}^{c}\cup {\cal V}_{A|CX} \cup {\cal V}_{A}\!\setminus\!{\cal V}_{A|C} } \!\!\!\!\mathbb E_{Q_{U_{1}^{1:j-1}U_{2}^{N}X^N}} \!\Big[ \\
		&\qquad \qquad \qquad \qquad \mathbb{D} ({Q}_{U_{1}^j|U_{1}^{1:j-1}U_{2}^NX^N}||\tilde{P}_{U_{1_i}^j|U_{1_i}^{1:j-1}U_{2_i}^NX_i^N}) \Big]\\
		&\stackrel{(g)}{=} \sum_{j \in {\cal V}_{C}^{c}} \Big( H(U_{2}^j|U_{2}^{1:j-1}) - H(U_{2}^j|U_{2}^{1:j-1}X^N) \Big)\\ 
		&+\sum_{j \in {\cal V}_{C|X}} \Big( 1-H(U_{2}^j|U_{2}^{1:j-1}X^N) \Big)\\
		&+\sum_{j \in {\cal V}_{A}^{c}} \Big( H(U_{1}^j|U_{1}^{1:j-1}) -H(U_{1}^j|U_{1}^{1:j-1}U_2^NX^N) \Big)\\
		&+\sum_{j \in {\cal V}_{A|CX}} \!\Big( 1- H(U_{1}^j|U_{1}^{1:j-1}U_2^NX^N) \Big)\\
		&+ \sum_{j \in {\cal V}_{A|C}^{c} \!\setminus\!{\cal V}_{A}^{c}} \Big( H(U_{1}^j|U_{1}^{1:j-1}U_{2}^N) - H(U_{1}^j|U_{1}^{1:j-1}U_{2}^NX^N) \Big)\\
		&\stackrel{(h)}{=} \sum_{j \in {\cal V}_{C}^{c}} \Big( H(U_{2}^j|U_{2}^{1:j-1}) - H(U_{2}^j|U_{2}^{1:j-1}X^N) \Big)\\ 
		&+\sum_{j \in {\cal V}_{C|X}} \Big( 1-H(U_{2}^j|U_{2}^{1:j-1}X^N) \Big)\\
		&+\sum_{j \in {\cal V}_{A}^{c}} \Big( H(U_{1}^j|U_{1}^{1:j-1}) -H(U_{1}^j|U_{1}^{1:j-1}C^NX^N) \Big)\\
		&+\sum_{j \in {\cal V}_{A|CX}} \!\Big( 1- H(U_{1}^j|U_{1}^{1:j-1}C^NX^N) \Big)\\
		&+\!\!\sum_{j \in {\cal V}_{A|C}^{c} \!\setminus\!{\cal V}_{A}^{c}} \Big( H(U_{1}^j|U_{1}^{1:j-1}C^N) - H(U_{1}^j|U_{1}^{1:j-1}C^NX^N) \Big)\\
		&\stackrel{(i)}{\leq} (|{\cal V}_{C}^{c}|+|{\cal V}_{C|X}|+ |{\cal V}_{A|XC}|+|{\cal V}_{A|C}^{c}|)\delta_N \leq 2N\delta_N
		 \end{split}
		 \end{equation*}
\noindent where
\begin{itemize} 
	\item[(a)] holds by invertibility of $G_n$;
	\item[(b)]\hspace*{-0.5em}\;-\;(d)~follows from the chain rule of the KL divergence \cite{EoIT:2006};
	\item[(e)] results from the definitions of the conditional distributions in \eqref{eq:MsgEncC}, and \eqref{eq:MsgEncA};
	\item[(f)] follows from the definitions of the index sets as shown in Figures~\ref{fig:CoordiationSets1} and~\ref{fig:CoordiationSets2};
	\item[(g)] results from the encoding of $\widetilde{U}_{1_i}^N$ and $\widetilde{U}_{2_i}^N$ bit-by-bit at ${\cal E}_1$ and ${\cal E}_2,$ respectively, with uniformly distributed randomness bits and message bits. These bits are generated by applying successive cancellation encoding using previous bits and side information with conditional distributions defined in \eqref{eq:MsgEncC} and \eqref{eq:MsgEncA};
	\item[(h)] holds by the one-to-one relation between ${U}_{2}^N$ and $C^N$;
	\item[(i)] follows from the sets defined in~\eqref{eq:C_Sets} and~\eqref{eq:A_Sets}.
\end{itemize}
\end{proof}
	\vspace{-3ex}
	\begin{lemma}\label{lemma2}
		For block $i \in \llbracket 1,k \rrbracket ,$ we have
		\vspace*{-1ex}
		\begin{align*}
		\mathbb{D}(&\tilde{P}_{X_i^NY_i^N}||Q_{X^NY^N})\\ &\leq\mathbb{D}(\tilde{P}_{X_i^NA_i^NC_i^NB_i^N\widehat{C}_i^NY_i^N}||Q_{X^NA^NC^NB^N\widehat{C}^NY^N})\leq \delta_{N}^{(2)} 
		\end{align*}
		\vspace*{-1ex}
		where $\delta_{N}^{(2)}\triangleq {\cal O}(\sqrt{N^3 \delta_N}).$ 
	\end{lemma}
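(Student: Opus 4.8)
The plan is to obtain the left inequality from the data‑processing (monotonicity) property of relative entropy, and the right inequality by extending the bound of Lemma~\ref{lemma1} along the decoding pipeline and then converting the resulting $\mathbb{D}(Q\|\tilde{P})$‑type estimate into the $\mathbb{D}(\tilde{P}\|Q)$‑type estimate required by the coordination criterion. For the left inequality, observe that under both $\tilde{P}$ and $Q$ the pair $(X^N,Y^N)$ is a marginal of $(X^N,A^N,C^N,B^N,\widehat{C}^N,Y^N)$; since marginalizing a pair of distributions never increases their relative entropy, the first bound is immediate.

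For the right inequality I would start from $\mathbb{D}(Q_{A^NC^NX^N}\|\tilde{P}_{A_i^NC_i^NX_i^N})\le 2N\delta_N$ and grow the tuple one variable at a time. The channel output $B_i^N$ is produced from $A_i^N$ through $\prod_l P_{B|A}$ in the induced scheme and through the same kernel in the target, so by the chain rule for relative entropy appending $B_i^N$ leaves the divergence unchanged. Next, the decoded $\widehat{C}_i^N$ is generated from $B_i^N$ (with the block's common‑randomness and chaining bits treated as externally supplied uniform bits) by the successive‑cancellation kernel \eqref{eq:MsgDec}, and $Y_i^N$ from $(B_i^N,\widehat{C}_i^N)$ by the channel‑simulation kernel \eqref{eq:seqGen}; applying these same two kernels on top of $Q_{A^NC^NX^NB^N}$ produces a distribution $Q'$ with $\mathbb{D}(Q'\|\tilde{P}_{X_i^N\cdots Y_i^N})\le 2N\delta_N$. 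Invoking the reliability and resolvability guarantees of the superposition polar construction --- the nesting ${\cal L}_2\subseteq{\cal L}_1$, the size relations among the sets in \eqref{eq:C_Sets} and \eqref{eq:A_Sets}, and the arguments of \cite{CBJ16,PC_BCGoela2015} --- one shows that $\widehat{C}_i^N$ coincides with $C_i^N$ outside an event of vanishing probability and that \eqref{eq:seqGen} reproduces $\prod_l Q_{Y|BC}$ up to a negligible deviation, so $\|Q'-Q_{X^NA^NC^NB^N\widehat{C}^NY^N}\|_{\scriptscriptstyle TV}$ is $o(\sqrt{N\delta_N})$. Pinsker's inequality applied to the preceding display together with the triangle inequality then gives $\|\tilde{P}_{X_i^N\cdots Y_i^N}-Q_{X^N\cdots Y^N}\|_{\scriptscriptstyle TV}\le\sqrt{N\delta_N}+o(\sqrt{N\delta_N})$.

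It remains to turn this total‑variation bound back into a bound on $\mathbb{D}(\tilde{P}_{X_i^N\cdots Y_i^N}\|Q_{X^N\cdots Y^N})$. Because $Q_{X^N\cdots Y^N}$ is the $N$‑fold i.i.d.\ extension of $Q_{XYABC\widehat{C}}$ (equivalently, a length‑$N$ composition of kernels whose nonzero entries are bounded away from $0$), it has full support with $\min^{*}(Q_{X^N\cdots Y^N})\ge(\min^{*}(Q_{XYABC\widehat{C}}))^{N}=2^{-O(N)}$, and $\tilde{P}_{X_i^N\cdots Y_i^N}$ is absolutely continuous with respect to it. Writing $\mathbb{D}(\tilde{P}\|Q)=H(Q)-H(\tilde{P})-\sum(\tilde{P}-Q)\log Q$, bounding the entropy gap by uniform continuity of entropy on an alphabet of size $2^{O(N)}$ and the last term by $\|\tilde{P}-Q\|_1\log\frac{1}{\min^{*}(Q)}$, yields $\mathbb{D}(\tilde{P}_{X_i^N\cdots Y_i^N}\|Q_{X^N\cdots Y^N})\le\|\tilde{P}-Q\|_1\cdot O(N)=O(N\sqrt{N\delta_N})=O(\sqrt{N^3\delta_N})$, which is the claimed $\delta_{N}^{(2)}$.

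The main obstacle is precisely this direction switch: Lemma~\ref{lemma1} naturally controls $\mathbb{D}(Q\|\tilde{P})$, whereas the coordination criterion is stated in terms of $\mathbb{D}(\tilde{P}\|Q)$, and the only generic route between them is Pinsker followed by a reverse bound, which both loses a factor of $N$ (producing $\sqrt{N^3\delta_N}$ rather than the $N\delta_N$ one might naively hope for) and relies on the full‑support/$\min^{*}$ control of $Q$ and on absolute continuity of $\tilde{P}$ with respect to $Q$. The second, more structural, care point is justifying that the decoder and channel‑simulation maps, when driven by the ideal input statistics, reproduce the i.i.d.\ target up to negligible error --- this is where the polar code's reliability (via the set alignments above) and its resolvability property feed in, and it is the step that ties this lemma to the constructions of \cite{CBJ16,PC_BCGoela2015}.
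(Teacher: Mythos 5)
Your left inequality (monotonicity of KL under marginalization) and your closing direction switch (Pinsker plus a reverse bound paying $\log(1/\min^{*}(Q^{\otimes N}))=O(N)$, which is exactly what turns $N\delta_N$ into $\sqrt{N^3\delta_N}$) are both sound and correspond to what the paper does via \cite[Lemmas 14 and 16]{CBJ16}, applied term-by-term in KL rather than once at the TV level. The genuine gap is in your hybrid step. You treat the decoder \eqref{eq:MsgDec} and the simulator \eqref{eq:seqGen} as a fixed kernel from $B_i^N$ alone, ``with the block's common-randomness and chaining bits treated as externally supplied uniform bits.'' But those frozen bits are the \emph{same} bits $J_i,\bar J_1$ that the encoder ${\cal E}_2$ embedded into $U_{2_i}^N$; under the induced distribution they are a deterministic function of $C_i^N$, so $\tilde{P}_{\widehat{C}_i^N Y_i^N\mid X_i^N A_i^N C_i^N B_i^N}$ is a kernel of $(C_i^N,B_i^N)$, not of $B_i^N$ with fresh uniform inputs. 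Consequently the actual induced joint does not factor as $\tilde{P}_{X_i^N A_i^N C_i^N B_i^N}\times K(\widehat{C},Y\mid B)$, and your claim $\mathbb{D}(Q'\Vert\tilde{P}_{X_i^N\cdots Y_i^N})\le 2N\delta_N$ does not follow from the chain rule as stated (with genuinely fresh frozen bits the decoder would not even know the codeword's frozen positions). The fix is to condition on $(A^N,C^N)$ and use the Markov chain $X^N-(A^N,C^N)-(B^N,\widehat{C}^N)-Y^N$, comparing $\tilde{P}_{B^N\widehat{C}^N\mid A^NC^N}$ with $Q_{B^N\widehat{C}^N\mid A^NC^N}$, which is precisely how the paper's steps (a)--(b) and (h) are organized.

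Second, the statement that $\norm{Q'-Q_{X^N\cdots Y^N}}_{{\scriptscriptstyle TV}}=o(\sqrt{N\delta_N})$ is asserted (``invoking the reliability and resolvability guarantees \dots one shows'') rather than proved, yet it is the substantive content of the lemma: it requires the explicit bit-by-bit KL estimates $\mathbb{D}(Q_{B^N\widehat{C}^N\mid A^NC^N}\Vert\tilde{P}_{B_i^N\widehat{C}_i^N\mid A_i^NC_i^N})\le N\delta_N$ and $\mathbb{D}(Q_{Y^N\mid B^N\widehat{C}^N}\Vert\tilde{P}_{Y_i^N\mid B_i^N\widehat{C}_i^N})\le N\delta_N$ obtained from \eqref{eq:MsgDec}, \eqref{eq:seqGen} and the polarization sets \eqref{eq:C_Sets}, \eqref{eq:Y_Set}, which give $O(\sqrt{N\delta_N})$ in TV (not $o(\cdot)$, though $O(\cdot)$ suffices for your final bound). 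Moreover, ``$\widehat{C}^N=C^N$ with high probability'' is not the right mechanism: the ideal $Q_{\widehat{C}^N\mid B^N}$ is the posterior-resampling kernel, and the comparison must also cover the positions filled from shared randomness (a resolvability-type estimate), not only the reliably decodable ones. With the conditioning fixed as above and these two KL computations carried out, your TV-level bookkeeping followed by the reverse bound does recover $\delta_N^{(2)}={\cal O}(\sqrt{N^3\delta_N})$; as written, however, the proposal defers exactly the steps that the paper's proof of Lemma~\ref{lemma2} actually performs.
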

\begin{proof}
	Consider the argument shown at the top of the following page. In this argument:
	
	\begin{figure*}[!t]
		\begin{equation}
		\begin{split} \notag
		\mathbb{D}(\tilde{P}_{X_i^NA_i^NC_i^NB_i^N\widehat{C}_i^NY_i^N}&||Q_{X^NA^NC^NB^N\widehat{C}^NY^N})\\ 
		&\stackrel{}{=} \mathbb{D}(\tilde{P}_{Y_i^N|X_i^NA_i^NC_i^NB_i^N\widehat{C}_i^N}\tilde{P}_{X_i^NA_i^NC_i^NB_i^N\widehat{C}_i^N}||Q_{Y^N|X^NA^NC^NB^N\widehat{C}^N}Q_{X^NA^NC^NB^N\widehat{C}^N}) \\
		&\stackrel{(a)}{=} \mathbb{D}(\tilde{P}_{Y_i^N|B_i^N\widehat{C}_i^N}\tilde{P}_{X_i^NA_i^NC_i^NB_i^N\widehat{C}_i^N}||Q_{Y^N|B^N\widehat{C}^N}Q_{X^NA^NC^NB^N\widehat{C}^N}) \\
		&\stackrel{}{=} \mathbb{D}(\tilde{P}_{Y_i^N|B_i^N\widehat{C}_i^N}\tilde{P}_{B_i^N\widehat{C}_i^N|X_i^NA_i^NC_i^N}\tilde{P}_{X_i^NA_i^NC_i^N}||Q_{Y^N|B^N\widehat{C}^N}Q_{B^N\widehat{C}^N|X^NA^NC^N}Q_{X^NA^NC^N}) \\
		&\stackrel{(b)}{=} \mathbb{D}(\tilde{P}_{Y_i^N|B_i^N\widehat{C}_i^N}\tilde{P}_{B_i^N\widehat{C}_i^N|A_i^NC_i^N}\tilde{P}_{X_i^NA_i^NC_i^N}||Q_{Y^N|B^N\widehat{C}^N}Q_{B^N\widehat{C}^N|A^NC^N}Q_{X^NA^NC^N}) \\
		&\stackrel{(c)}{\leq} \hat{\delta}_N^{(2)} + \mathbb{D}(\tilde{P}_{Y_i^N|B_i^N\widehat{C}_i^N}\tilde{P}_{B_i^N\widehat{C}_i^N|A_i^NC_i^N}\tilde{P}_{X_i^NA_i^NC_i^N}||\tilde{P}_{Y_i^N|B_i^N\widehat{C}_i^N}\tilde{P}_{B_i^N\widehat{C}_i^N|A_i^NC_i^N}Q_{X^NA^NC^N}) \\
		&\quad + \mathbb{D}(\tilde{P}_{Y_i^N|B_i^N\widehat{C}_i^N}\tilde{P}_{B_i^N\widehat{C}_i^N|A_i^NC_i^N}Q_{X^NA^NC^N}||Q_{Y^N|B^N\widehat{C}^N}Q_{B^N\widehat{C}^N|A^NC^N}Q_{X^NA^NC^N})\\
		&\stackrel{(d)}{=} \hat{\delta}_N^{(2)} + \mathbb{D}(\tilde{P}_{X_i^NA_i^NC_i^N}||Q_{X^NA^NC^N}) 
		+\mathbb{D}(\tilde{P}_{Y_i^N|B_i^N\widehat{C}_i^N}\tilde{P}_{B_i^N\widehat{C}_i^N|A_i^NC_i^N}||Q_{Y^N|B^N\widehat{C}^N}Q_{B^N\widehat{C}^N|A^NC^N})\\
		&\stackrel{(e)}{\leq} \hat{\delta}_N^{(2)} + \delta_N^{(1)}
		+\mathbb{D}(\tilde{P}_{Y_i^N|B_i^N\widehat{C}_i^N}\tilde{P}_{B_i^N\widehat{C}_i^N|A_i^NC_i^N}||Q_{Y^N|B^N\widehat{C}^N}Q_{B^N\widehat{C}^N|A^NC^N})\\
		&\stackrel{(f)}{=} \hat{\delta}_N^{(2)} + \delta_N^{(1)}
		+\mathbb{D}(\tilde{P}_{Y_i^N|B_i^N\widehat{C}_i^N}||Q_{Y^N|B^N\widehat{C}^N})+\mathbb{D}(\tilde{P}_{B_i^N\widehat{C}_i^N|A_i^NC_i^N}||Q_{B^N\widehat{C}^N|A^NC^N})\\
		&\stackrel{(g)}{\leq} \hat{\delta}_N^{(2)} + \delta_N^{(1)}
		-N\log(\mu_{YB\widehat{C}})\sqrt{2\ln 2} \sqrt{\mathbb{D}(Q_{Y^N|B^N\widehat{C}^N}||\tilde{P}_{Y_i^N|B_i^N\widehat{C}_i^N})}\\
		&\quad -N\log(\mu_{ACB\widehat{C}})\sqrt{2\ln 2} \sqrt{\mathbb{D}(Q_{B^N\widehat{C}^N|A^NC^N}||\tilde{P}_{B_i^N\widehat{C}_i^N|A_i^NC_i^N})} \\
		&\stackrel{(h)}{\leq} \hat{\delta}_N^{(2)} + \delta_N^{(1)} -N\log(\mu_{YB\widehat{C}})\sqrt{2\ln 2} \sqrt{N \delta_N} -N\log(\mu_{ACB\widehat{C}})\sqrt{2\ln 2} \sqrt{N \delta_N} 
		\end{split}
		\end{equation}
		\hrulefill
	\end{figure*}

\begin{itemize} 
	\item[(a)]\hspace*{-0.5em}\;-\;(b) results from the Markov chain $X^N\!-\!A^NC^N\!-\!B^N\widehat{C}^N\!-\!Y^N$; 
	\item[(c)] follows from \cite[Lemma 16]{CBJ16} where 
		\vspace*{-1ex}
		\begin{align*}\hat{\delta}_N^{(2)} &\triangleq -N\log(\mu_{XACB\widehat{C}Y})\sqrt{2\ln 2} \sqrt{2N\delta_{N}},\\
		\mu_{XACB\widehat{C}Y}&\triangleq {\textstyle\min^{*}_{x,y,a,c,b,\hat{c}}} \big(Q_{XACB\widehat{C}Y}\big);
		\end{align*}
		\vspace*{-1ex}
	\item[(d)] follows from the chain rule of KL divergence \cite{EoIT:2006};
	\item[(e)] holds by Lemma~\ref{lemma1} and \cite[Lemma 14]{CBJ16} where 
		\vspace{-1ex}
		\begin{align*}\delta_N^{(1)} &\triangleq -N\log(\mu_{XAC})\sqrt{2\ln 2} \sqrt{2N\delta_{N}},\\
		 \mu_{XAC} &\triangleq {\textstyle\min^{*}_{x,a,c}} \big(Q_{XAC}\big);
		 \end{align*}
	\item[(f)] follows from the chain rule of KL divergence \cite{EoIT:2006};
	\item[(g)] holds by \cite[Lemma 14]{CBJ16}, where
	\vspace{-1ex}
	\begin{align*}\mu_{ACB\widehat{C}} &\triangleq{\textstyle\min^{*}_{a,c,b,\hat{c}}} \big(Q_{ACB\widehat{C}}\big),\\
	\mu_{YB\widehat{C}} &\triangleq{\textstyle\min^{*}_{y,b,\hat{c}}} \big(Q_{YB\widehat{C}}\big);\end{align*}
	\item[(h)] holds by bounding the terms\\ $\mathbb{D}(Q_{B^N\widehat{C}^N|A^NC^N}||\tilde{P}_{B_i^N\widehat{C}_i^N|A_i^NC_i^N}),$ and\\ $\mathbb{D}(Q_{Y^N|B^N\widehat{C}^N}||\tilde{P}_{Y_i^N|B_i^N\widehat{C}_i^N})$, as follows: 
	\begin{equation*}
	\begin{split} 
	&\mathbb{D}(Q_{B^N\widehat{C}^N|A^NC^N}||\tilde{P}_{B_i^N\widehat{C}_i^N|A_i^NC_i^N})\\
		&\stackrel{(a)}{=} \mathbb{D}(Q_{B^N|A^N}Q_{\widehat{C}^N|B^N}||{Q}_{B^N|A^N}\tilde{P}_{\widehat{C}_i^N|B_i^N})\\ 
		&\stackrel{}{=} \mathbb{D}(Q_{\widehat{C}^N|B^N}||\tilde{P}_{\widehat{C}_i^N|B_i^N})\\ 
		&\stackrel{(b)}{=} \mathbb{D}(Q_{\widehat{U}^N|B^N}||\tilde{P}_{\widehat{U}_i^N|B_i^N})\\
		&\stackrel{(c)}{=}\!\sum_{j=1}^{N} \mathbb E_{Q_{U_{2}^{1:j-1}B^N}}\Big[ \mathbb{D}(Q_{U_{2}^j|U_{2}^{1:j-1}B^N}||\tilde{P}_{U_{2_i}^j|U_{2_i}^{1:j-1}B_i^N}) \Big]\\
		&\stackrel{(d)}{=}\!\sum_{j \in {\cal V}_{C}^{c}}\!\mathbb E_{Q_{U_{2}^{1:j-1}B^N}}\Big[ \mathbb{D}(Q_{U_{2}^j|U_{2}^{1:j-1}B^N}||\tilde{P}_{U_{2_i}^j|U_{2_i}^{1:j-1}B_i^N}\!) \Big] \\
		&+\!\!\,\sum_{\!\!j \in {\cal H}_{C|B}\cup {\cal V}_{C|X}\!\!}\!\!\!\!\!\!\!\!\mathbb E_{Q_{U_{2}^{1:j-1}\!B^N}}\!\Big[ \mathbb{D}(Q_{\!U_{2}^j\!|U_{2}^{1:j-1}\!B^N}\!||\tilde{P}_{\!U_{2_i}^j\!|U_{2_i}^{1:j-1}\!B_i^N}\!) \Big]\\
		&\stackrel{(e)}{=}\!\!\sum_{j \in {\cal V}_{C}^{c}}\!\!\!\Big( H(U_{2}^j|U_{2}^{1:j-1}) - H(U_{2}^j|U_{2}^{1:j-1}B^N) \Big)\\
		&+\sum_{j \in {\cal H}_{C|B}\cup {\cal V}_{C|X}}\!\!\!\Big( 1 - H(U_{2}^j|U_{2}^{1:j-1}B^N) \Big)\\ 
		&\stackrel{(f)}{\leq} |{\cal V}_{C}^{c}|\delta_N +|{\cal H}_{C|B}\cup {\cal V}_{C|X}|\delta_N \leq N\delta_N,  
		 \end{split}
		 \end{equation*}
	\noindent where
	\begin{itemize} 
		\item[(a)] results from the Markov chain $C-A-B-\widehat{C}$ and the fact that $\tilde{P}_{B_i^N|A_i^N}=Q_{B^N|A^N}$;
		\item[(b)] holds by the one-to-one relation between ${U}_{2}^N$ and $C^N$;
		\item[(c)] follows from the chain rule of KL divergence \cite{EoIT:2006};
		\item[(d)]\hspace*{-0.5em}\;-\;(e) results from the definitions of the conditional distributions in~\eqref{eq:MsgDec};
		\item[(f)] follows from the sets defined in~\eqref{eq:C_Sets}. 
	\end{itemize}
	\begin{equation*}
	\hspace*{-1.5em}\begin{split} 
	&\mathbb{D}(Q_{Y^N|B^N\widehat{C}^N}||\tilde{P}_{Y_i^N|B_i^N\widehat{C}_i^N})\\ 
	&\!\stackrel{(a)}{=}\!\sum_{j=1}^{N} \mathbb E_{Q_{T^{1:j-1}\!B_i^N\!\widehat{C}_i^N}}\!\Big[ \mathbb{D}(Q_{T^j|T^{1:j-1}\!B^N\!\widehat{C}^N}||\tilde{P}_{T^j|T^{1:j-1}\!B_i^N\!\widehat{C}_i^N}) \Big] \\
	&\!\stackrel{(b)}{=}\!\!\!\!\!\!\sum_{j \in {\cal V}_{Y|BC}}\!\!\!\!\!\mathbb E_{Q_{T^{1:j-1}\!B_i^N\!\widehat{C}_i^N}}\!\Big[ \mathbb{D}(Q_{T^j|T^{1:j-1}\!B^N\!\widehat{C}^N}||\tilde{P}_{T^j|T^{1:j-1}\!B_i^N\!\widehat{C}_i^N}) \Big] \\
	&\!\stackrel{(c)}{=}\!\!\sum_{j \in {\cal V}_{Y|BC}}\!\!\!\Big( \log|{\cal Y}| - H(T^j|T^{1:j-1}B^NC^N) \Big)\\ 
	&\!\stackrel{(d)}{\leq} |{\cal V}_{Y|BC}|\delta_N \leq N\delta_N, 
	\end{split}
	\end{equation*}
	
	\noindent where
		\begin{itemize} 
			\item[(a)] follows from the chain rule of KL divergence \cite{EoIT:2006};
			\item[(b)]\hspace*{-0.5em}\;-\;(c) results from the definitions of the conditional distribution in~\eqref{eq:seqGen};
			\item[(d)] follows from the set defined in~\eqref{eq:Y_Set}. 
		\end{itemize}
\end{itemize}
\end{proof}

Now, Lemmas 3 and 4 provide the independence between two consecutive blocks and the independence between all blocks based on the results of Lemma 2. 

\begin{lemma}\label{lemma3}
	For block $i \in \llbracket 2,k \rrbracket ,$ we have
	$$ \mathbb{D}(\tilde{P}_{X_{i-1:i}^NY_{i-1:i}^N\bar{J}_{1}}||\tilde{P}_{X_{i-1}^NY_{i-1}^N\bar{J}_{1}}\tilde{P}_{X_{i}^NY_{i}^N}) \leq \delta_{N}^{(3)} $$
	where $\delta_{N}^{(3)}\triangleq {\cal O}(\sqrt[4]{N^{15}\delta_N}).$  
\end{lemma}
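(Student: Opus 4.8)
\emph{Proof strategy.} The plan is to isolate the single mechanism through which block $i$ can still depend on block $i-1$ once the recycled randomness $\bar{J}_1$ is fixed---the chaining of the ``noisy'' coordinates $\xi_{i-1}\triangleq\widetilde{U}_{2_{i-1}}^N[{\cal F}_3]$ of block $i-1$ into reliably decodable coordinates of block $i$---and to show that this mechanism leaks only a $\delta_N$-scale amount of information into the output $(X_i^N,Y_i^N)$ of block $i$. First I would observe that, by construction of Algorithms~1 and~2, the entire encoding and decoding of block $i$ is a randomized function of the triple $S\triangleq(\xi_{i-1},\bar{J}_1,\bar{J}_2)$ together with the fresh inputs $X_i^N$, $M_{1_i}$, $J_i$ of block $i$, which are independent of everything in blocks $1,\dots,i-1$. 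Hence $(X_{i-1}^N,Y_{i-1}^N)-S-(X_i^N,Y_i^N)$ is a Markov chain and, since $\bar{J}_1$ is a component of $S$, the chain rule for the KL divergence, convexity of $\mathbb{D}(\,\cdot\,||\,\widetilde{P}_{X_i^NY_i^N})$, and the tower property give
\begin{equation*}
\mathbb{D}(\widetilde{P}_{X_{i-1:i}^NY_{i-1:i}^N\bar{J}_1}||\widetilde{P}_{X_{i-1}^NY_{i-1}^N\bar{J}_1}\widetilde{P}_{X_i^NY_i^N})\;\leq\;\mathbb{D}(\widetilde{P}_{X_i^NY_i^NS}||\widetilde{P}_{X_i^NY_i^N}\widetilde{P}_S),
\end{equation*}
so it suffices to bound $I_{\widetilde{P}}(X_i^NY_i^N;\xi_{i-1},\bar{J}_1,\bar{J}_2)$.

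Next I would strip out the parts of $S$ that enter block $i$ only through a one-time pad. Writing $\xi_{i-1}=(\xi_{i-1}',\xi_{i-1}'')$, where $\xi_{i-1}'$ is deposited at $\acute{\cal F}_{3_1}\subset{\cal F}_1$ after an exclusive-or with fresh bits of $J_i$ and $\xi_{i-1}''$ at $\acute{\cal F}_{3_2}\subset{\cal F}_2$ after an exclusive-or with a portion of $\bar{J}_1$, the Crypto Lemma \cite[Lemma~2]{Forney2004role} shows that $(X_i^N,Y_i^N)$ is conditionally independent of $\xi_{i-1}'$ given $(\xi_{i-1}'',\bar{J}_1,\bar{J}_2)$, so $I_{\widetilde{P}}(X_i^NY_i^N;\xi_{i-1},\bar{J}_1,\bar{J}_2)=I_{\widetilde{P}}(X_i^NY_i^N;\xi_{i-1}'',\bar{J}_1,\bar{J}_2)$. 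Moreover $\xi_{i-1}''$, $\bar{J}_1$, and $\bar{J}_2$ touch block $i$ (at ${\cal E}_2/{\cal D}_2$ and at the channel-simulation step) \emph{only} through the coordinates of $\widetilde{U}_{2_i}^N$ and $\widehat{U}_{2_i}^N$ indexed by ${\cal F}_2\cup\hat{\cal F}_4$ and the coordinates of $\widetilde{U}_{1_i}^N$ indexed by ${\cal F}_7={\cal V}_{A|CXY}$; since ${\cal F}_2\cup\hat{\cal F}_4\subseteq{\cal V}_{C|XY}$, this reduces the task to bounding the mutual information between $(X_i^N,Y_i^N)$ and the $U$-coordinates indexed by the very-high-entropy-given-$(X,Y)$ set ${\cal V}_{C|XY}$ and the very-high-entropy-given-$(C,X,Y)$ set ${\cal V}_{A|CXY}$.

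It then remains to show this last quantity is $\delta_N$-small, which is the coordination/resolvability counterpart of the reliability computations in Lemmas~1 and~2. The bits fed into these coordinates are uniform, and under the target $Q$ one has $H(U_2^j|U_2^{1:j-1}X^NY^N)>1-\delta_N$ for $j\in{\cal V}_{C|XY}$ and $H(U_1^j|U_1^{1:j-1}C^NX^NY^N)>1-\delta_N$ for $j\in{\cal V}_{A|CXY}$, so a bit-by-bit chain-rule expansion of the KL divergence exactly as in the proof of Lemma~1 bounds the mutual information under $Q$ by ${\cal O}(N\delta_N)$. Invoking Lemma~2 to transfer this estimate from $Q$ to $\widetilde{P}$---converting the intervening total-variation gaps back into KL divergences by Pinsker's inequality and the bounded-density estimates of \cite[Lemmas~14 and~16]{CBJ16}, with the factors $\mu$ taken to be the relevant $\min^{*}$ of marginals of $Q$, and carrying the decoder replica $\widehat{U}_{2_i}^N[{\cal F}_2\cup\hat{\cal F}_4]$ and the output $Y_i^N$ through the Markov chain $C-A-(B,\widehat{C})-Y$ and the channel-simulation set ${\cal V}_{Y|BC}$ of \eqref{eq:Y_Set} in the same way as the corresponding term of Lemma~2---then produces the claimed bound $\delta_N^{(3)}={\cal O}(\sqrt[4]{N^{15}\delta_N})$.

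The step I expect to be the main obstacle is the bookkeeping in the second paragraph: verifying, from the precise structure of the chaining and randomness-recycling protocol, that conditioning on $\bar{J}_1$ leaves the one-time-padded copy of $\xi_{i-1}$ residing inside the sets ${\cal F}_2$, $\hat{\cal F}_4$, ${\cal F}_7$ as the \emph{only} channel of cross-block dependence, and in particular that $Y_i^N$---which the decoder produces with the help of $\bar{J}_1$---creates no further leak. This requires tracking the recycled bits through both ${\cal E}_2/{\cal D}_2$ and the channel simulation, and confirming that the recycled $\bar{J}_2$, which is marginalized out in the lemma statement, likewise only touches ${\cal V}_{A|CXY}$ and therefore perturbs the bound by at most ${\cal O}(N\delta_N)$.
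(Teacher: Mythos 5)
Your route is genuinely different from the paper's: the paper disposes of this lemma by re-running the proof of \cite[Lemma~3]{CBJ16} under the substitutions $q_{U^{1:N}}\leftarrow Q_{C^N}$, $q_{Y^{1:N}}\leftarrow Q_{X^NY^N}$, $\tilde{p}_{U_i^{1:N}}\leftarrow\tilde{P}_{C_i^N}$, $\tilde{p}_{Y_i^{1:N}}\leftarrow\tilde{P}_{X_i^NY_i^N}$, $\bar{R}_1\leftarrow\bar{J}_1$, with the chain $X_{i-1}^N\widetilde{Y}_{i-1}^N-\bar{J}_1-X_i^N\widetilde{Y}_i^N$ replacing the one there, so the quantitative work (including the fourth-root exponent) is inherited rather than re-derived, whereas you attempt a direct reduction to $I_{\tilde P}(X_i^NY_i^N;\xi_{i-1},\bar J_1,\bar J_2)$. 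The difficulty is that the load-bearing steps of your version are not established. The Markov chain $(X_{i-1}^N,Y_{i-1}^N)-S-(X_i^N,Y_i^N)$ does not follow from the observation that block $i$ is encoded/decoded from $S$ and fresh block-$i$ inputs: decoding runs in \emph{reverse} block order, and Algorithm~2 recovers $\widehat U_{2_{i-1}}^N[{\cal F}_3]$ from $\widehat U_{2_{i}}^N[\acute{\cal F}_{3_1}\cup\acute{\cal F}_{3_2}]$, which is decoded from $B_i^N$. Hence $Y_{i-1}^N$ is itself a function of block-$i$ channel output, so even after conditioning on $S$ the two sides of your chain share the block-$i$ codeword and channel noise; exact conditional independence fails, and it holds only up to the event that the chained bits are decoded correctly (equivalently, after a coupling with an idealized decoder that is handed $\xi_{i-1}$ directly), an argument you do not supply.

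Second, the Crypto Lemma gives uniformity of $\xi_{i-1}'\oplus{\cal F}_{3_1}^{(i)}$ and its independence from $\xi_{i-1}'$, but not the claimed conditional independence of $(X_i^N,Y_i^N)$ from $\xi_{i-1}'$: the pad ${\cal F}_{3_1}^{(i)}$ is \emph{shared} randomness that the block-$i$ decoder also uses (steps 3--4 of Algorithm~2), while $B_i^N$ essentially determines the padded positions since $\acute{\cal F}_{3_1}\subset{\cal F}_1\subset{\cal H}_{C|B}^{c}$; the pair feeding the channel simulation that generates $Y_i^N$ can therefore jointly reveal $\xi_{i-1}'$, so independence of the padded value alone does not yield your claim--at best one gets approximate independence, which again needs a quantitative argument. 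Finally, the step where the bound and the exponent in $\delta_N^{(3)}={\cal O}(\sqrt[4]{N^{15}\delta_N})$ would actually be produced is only gestured at (``in the same way as the corresponding term of Lemma~2''); transferring an ${\cal O}(N\delta_N)$ estimate under $Q$ into a mutual-information bound under $\tilde P$ is precisely the KL/entropy-continuity machinery that constitutes the proof of \cite[Lemma~3]{CBJ16}, so as written your argument ultimately presupposes the proof the paper cites while adding unverified independence claims on top. To repair it you would need (i) a formal treatment of the backward (decoder-side) chaining, e.g.\ via a decoding-error coupling, and (ii) an explicit $Q\to\tilde P$ transfer confirming the claimed scaling.
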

\begin{proof} We reuse the proof of \cite[Lemma 3]{CBJ16} with substitutions $q_{U^{1:N}} \leftarrow Q_{C^N},$ $q_{Y^{1:N}} \leftarrow Q_{X^N\!Y^N},$ $\tilde{p}_{U_i^{1:N}} \leftarrow \tilde{P}_{C_i^N},$ ${\tilde{p}_{Y_i^{1:N}}\!\leftarrow\!\tilde{P}_{Y_i^N\!X_i^N}},$ and $\bar{R}_1 \leftarrow \bar{J}_{1}$. This will result in the Markov chain $\!X_{i-1}^N\!\widetilde{Y}_{i-1}^N-\bar{J}_{1}-X_i^N\!\widetilde{Y}_i^N\!$ replacing the chain in~\cite[Lemma~3]{CBJ16}. 
\end{proof}

\begin{lemma}\label{lemma4}
	We have
	$$ \mathbb{D}\Big(\tilde{P}_{X_{1:k}^NY_{1:k}^N}||\prod_{i=1}^{k}\tilde{P}_{X_{i}^NY_{i}^N}\Big) \leq (k-1)\delta_{N}^{(3)} $$
	where $\delta_{N}^{(3)}$ is defined in Lemma 3. 
\end{lemma}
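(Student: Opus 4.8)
The plan is to prove the bound by induction on the number of blocks $k$, the inductive step being a chain-rule telescoping that invokes Lemma~\ref{lemma3}. Write $W_i\triangleq(X_i^N,Y_i^N)$ and let $I_{\tilde{P}}$ denote mutual information computed under the induced distribution $\tilde{P}$. The base case $k=1$ is the trivial identity $0=0$. For the inductive step, I would first split off the last block with the chain rule of the KL divergence \cite{EoIT:2006},
\begin{align*}
\mathbb{D}\Big(\tilde{P}_{X_{1:k}^NY_{1:k}^N}\Big\|\prod_{i=1}^{k}\tilde{P}_{X_{i}^NY_{i}^N}\Big)
&=\mathbb{D}\Big(\tilde{P}_{X_{1:k-1}^NY_{1:k-1}^N}\Big\|\prod_{i=1}^{k-1}\tilde{P}_{X_{i}^NY_{i}^N}\Big)\\
&\quad+\mathbb E_{\tilde{P}_{W_{1:k-1}}}\!\big[\mathbb{D}\big(\tilde{P}_{W_{k}|W_{1:k-1}}\big\|\tilde{P}_{W_{k}}\big)\big],
\end{align*}
bound the first summand by $(k-2)\delta_{N}^{(3)}$ using the induction hypothesis, and note that the second summand equals $I_{\tilde{P}}(W_k;W_{1:k-1})$.

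It then suffices to show $I_{\tilde{P}}(W_k;W_{1:k-1})\leq\delta_{N}^{(3)}$, and here I would appeal to the structure of Algorithm~1: the only coupling between block $k$ and the earlier blocks is through the recycled common randomness $\bar{J}_1$ and the chaining bits handed over from block $k-1$, while all remaining inputs to block $k$---namely $X_k^N$, the fresh shared bits $J_k$, and the fresh local randomness---are drawn independently of the past. This yields the Markov chain $W_{1:k-1}-\bar{J}_1-W_k$ under $\tilde{P}$, hence
\begin{align*}
I_{\tilde{P}}(W_k;W_{1:k-1})
&\leq I_{\tilde{P}}(W_k;W_{1:k-1},\bar{J}_1)
=I_{\tilde{P}}(W_k;\bar{J}_1)\\
&\leq I_{\tilde{P}}(W_k;W_{k-1},\bar{J}_1)\\
&=\mathbb{D}\big(\tilde{P}_{X_{k-1:k}^NY_{k-1:k}^N\bar{J}_{1}}\big\|\tilde{P}_{X_{k-1}^NY_{k-1}^N\bar{J}_{1}}\tilde{P}_{X_{k}^NY_{k}^N}\big)
\leq\delta_{N}^{(3)},
\end{align*}
where the two inequalities are instances of the chain rule of mutual information ($I(A;B)\leq I(A;B,C)$), the equality $I_{\tilde{P}}(W_k;W_{1:k-1},\bar{J}_1)=I_{\tilde{P}}(W_k;\bar{J}_1)$ uses the Markov chain, the next equality is the characterization of mutual information as a divergence, and the final bound is Lemma~\ref{lemma3}. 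Adding the two summands closes the induction with the claimed $(k-1)\delta_{N}^{(3)}$; equivalently, unrolling the recursion, $\mathbb{D}(\tilde{P}_{X_{1:k}^NY_{1:k}^N}\|\prod_{i=1}^{k}\tilde{P}_{X_{i}^NY_{i}^N})=\sum_{i=2}^{k}I_{\tilde{P}}(W_i;W_{1:i-1})\leq(k-1)\delta_{N}^{(3)}$.

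The main obstacle is the Markov chain $W_{1:k-1}-\bar{J}_1-W_k$, which is the only genuinely scheme-dependent ingredient and must be read off carefully from Algorithms~1 and~2: once $\bar{J}_1$ is fixed, the ${\cal F}_3$ bits chained forward from block $k-1$ have been one-time-padded---with the recycled pad contained in $\bar{J}_1$ together with the fresh per-block bits of $J_k$---so that by the Crypto Lemma \cite[Lemma~2]{Forney2004role} they are uniform and, conditioned on $\bar{J}_1$, independent of the earlier blocks, while the remaining recycled randomness $\bar{J}_2$ lives on the very-high-entropy set ${\cal F}_7={\cal V}_{A|CXY}$ and hence (asymptotically) does not couple $W_k$ to the past. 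This is precisely the step at which the proof imports \cite[Lemma~3]{CBJ16} under the same substitutions already used in the proof of Lemma~\ref{lemma3}; everything else is the routine telescoping bookkeeping displayed above.
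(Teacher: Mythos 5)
Your telescoping skeleton is sound and is in fact the same route the paper takes by importing the proof of \cite[Lemma~4]{CBJ16}: chain rule for the KL divergence, identify each increment with $I_{\tilde P}(W_i;W_{1:i-1})$ where $W_i\triangleq(X_i^N,Y_i^N)$, and reduce it to the Lemma~\ref{lemma3} divergence via a Markov chain. The gap is precisely the scheme-dependent step you single out: the exact chain $W_{1:k-1}-\bar J_1-W_k$ does not hold, and your Crypto-Lemma justification runs backwards. The Crypto Lemma makes $V\oplus\mathrm{pad}$ independent of $V$ only when the pad is uniform and \emph{not} conditioned on; once you fix $\bar J_1$, the recycled pad ${\cal F}_{3_2}\subset\bar J_1$ is a known constant, so the bits embedded in block $k$ at positions $\acute{\cal F}_{3_2}$, namely $\widetilde U_{2_{k-1}}^N[{\cal F}_3\setminus{\cal F}_{3_1}]\oplus{\cal F}_{3_2}$, become a deterministic function of block $(k-1)$'s $U_2$-sequence, which is correlated with $(X_{k-1}^N,Y_{k-1}^N)$ (those bits are drawn conditionally on $X_{k-1}^N$ via \eqref{eq:MsgEncC}) and they propagate into $C_k^N$, $B_k^N$, $\widehat C_k^N$, $\widetilde Y_k^N$. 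Moreover, since Algorithm~2 decodes in reverse order, $\widetilde Y_{k-1}^N$ is constructed from bits recovered out of $B_k^N$, a second coupling that $\bar J_1$ alone does not screen off; and your remark that $\bar J_2$ ``asymptotically'' decouples blocks cannot support an exact conditional-independence statement. Hence the equality $I_{\tilde P}(W_k;W_{1:k-1},\bar J_1)=I_{\tilde P}(W_k;\bar J_1)$ is unjustified.

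The repair is to use the weaker chain the paper actually states, $X_{1:i-2}^N\widetilde Y_{1:i-2}^N-\bar J_1X_{i-1}^N\widetilde Y_{i-1}^N-X_i^N\widetilde Y_i^N$, i.e., keep the previous block inside the conditioning. Then
$I_{\tilde P}(W_i;W_{1:i-1})\le I_{\tilde P}(W_i;W_{1:i-1},\bar J_1)=I_{\tilde P}(W_i;W_{i-1},\bar J_1)\le\delta_N^{(3)}$,
where the equality uses that chain and the last step is exactly Lemma~\ref{lemma3}, whose divergence is $I_{\tilde P}\big((W_{i-1},\bar J_1);W_i\big)$. With that single substitution your induction closes verbatim and coincides with the paper's proof, which defers to \cite[Lemma~4]{CBJ16} under the stated substitutions.
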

\begin{proof} We reuse the proof of \cite[Lemma~4]{CBJ16} with substitutions $\tilde{p}_{Y_i^{1:N}} \leftarrow \tilde{P}_{X_i^NY_i^N},$ and $\bar{R}_1 \leftarrow \bar{J}_{1}$. This will result in the Markov chain $X_{1:i-2}^N\widetilde{Y}_{1:i-2}^N-\bar{J}_{1}X_{i-1}^N\widetilde{Y}_{i-1}^N-X_i^N\widetilde{Y}_i^N$ replacing the chain in \cite[Lemma~4]{CBJ16}.
\end{proof}

Finally, by the results of Lemma 4 we can show in Lemma 5 that the target distribution $Q_{X^NY^N}$ is approximated asymptotically over all blocks jointly. 
\begin{lemma}\label{lemma5}
	We have
	$$ \mathbb{D}\Big(\tilde{P}_{X_{1:k}^NY_{1:k}^N}||Q_{X^{1:kN}Y^{1:kN}}\Big) \leq \delta_{N}^{(4)}.$$
	where $\delta_{N}^{(4)}\triangleq {\cal O}(k^{3/2}N^{23/8}\delta_{N}^{1/8})$ 
\end{lemma}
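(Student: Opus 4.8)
The plan is to combine Lemma 4 (which controls the KL divergence between the induced joint distribution $\tilde P_{X_{1:k}^NY_{1:k}^N}$ and the product of its per-block marginals $\prod_{i=1}^k \tilde P_{X_i^NY_i^N}$) with Lemma 2 (which controls the per-block divergence $\mathbb D(\tilde P_{X_i^NY_i^N}\|Q_{X^NY^N})$) via the chain rule for KL divergence. First I would write, using the chain rule and the telescoping/decomposition identity for divergence from a product distribution,
\begin{align*}
\mathbb D\big(\tilde P_{X_{1:k}^NY_{1:k}^N}\big\|Q_{X^{1:kN}Y^{1:kN}}\big)
&= \mathbb D\Big(\tilde P_{X_{1:k}^NY_{1:k}^N}\Big\|\prod_{i=1}^k \tilde P_{X_i^NY_i^N}\Big)\\
&\quad + \sum_{i=1}^k \mathbb D\big(\tilde P_{X_i^NY_i^N}\big\|Q_{X^NY^N}\big),
\end{align*}
where the cross term vanishes because $Q_{X^{1:kN}Y^{1:kN}} = \prod_{i=1}^k Q_{X^NY^N}$ is itself a product over blocks (the target is $kN$ i.i.d.\ copies), so the expectation of the log-ratio between the two product measures splits additively. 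This is the standard manipulation used to lift a per-block guarantee to an all-blocks guarantee in block-Markov polar constructions.

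Next I would bound each of the two terms. Lemma 4 gives $\mathbb D(\tilde P_{X_{1:k}^NY_{1:k}^N}\|\prod_i \tilde P_{X_i^NY_i^N}) \le (k-1)\delta_N^{(3)}$ with $\delta_N^{(3)} = \mathcal O(\sqrt[4]{N^{15}\delta_N})$, and Lemma 2 gives $\mathbb D(\tilde P_{X_i^NY_i^N}\|Q_{X^NY^N}) \le \delta_N^{(2)} = \mathcal O(\sqrt{N^3\delta_N})$ for each $i$, so the sum contributes at most $k\,\delta_N^{(2)}$. Adding,
\[
\mathbb D\big(\tilde P_{X_{1:k}^NY_{1:k}^N}\big\|Q_{X^{1:kN}Y^{1:kN}}\big)
\le (k-1)\delta_N^{(3)} + k\,\delta_N^{(2)}
\le \delta_N^{(4)},
\]
and it remains only to check that the dominant term is $(k-1)\delta_N^{(3)} = \mathcal O(k\, N^{15/4}\delta_N^{1/4})$ and reconcile this with the claimed rate $\delta_N^{(4)} = \mathcal O(k^{3/2}N^{23/8}\delta_N^{1/8})$. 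Since $N^{15/4}\delta_N^{1/4} = (N^{15}\delta_N)^{1/4}$ and, using $\delta_N = 2^{-N^\beta}$, for large $N$ one has $(N^{15}\delta_N)^{1/4} \le (N^{23}\delta_N)^{1/8}$ (because $\delta_N^{1/8}$ decays faster than any polynomial factor, so $N^{15/2}\delta_N \le N^{23}$, i.e.\ $N^{15/2}\le N^{23}\delta_N^{-1}$ eventually), and $k \le k^{3/2}$, the bound $(k-1)\delta_N^{(3)}+k\delta_N^{(2)} = \mathcal O(k^{3/2}N^{23/8}\delta_N^{1/8})$ follows; hence the stated $\delta_N^{(4)}$ is a valid (if loose) bound.

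I expect the main obstacle to be purely bookkeeping rather than conceptual: verifying that the target distribution genuinely factors as $\prod_{i=1}^k Q_{X^NY^N}$ so that the cross term in the chain-rule expansion is exactly zero (this relies on the $X_{1:k}^N$ being i.i.d.\ across blocks by the problem setup and on the induced $\tilde P$ reproducing the per-block target closely, but the factorization of $Q$ itself is by definition), and then carefully tracking the exponents through the asymptotic comparison to land on the advertised $\mathcal O(k^{3/2}N^{23/8}\delta_N^{1/8})$. One subtlety worth a remark is that $\delta_N^{(3)}$ already dominates $\delta_N^{(2)}$ for large $N$, so the $k\,\delta_N^{(2)}$ contribution is absorbed; the factor $k^{3/2}$ in $\delta_N^{(4)}$ (versus the $k$ one naively gets) simply gives headroom and is consistent with how the analogous final bound is stated in \cite{CBJ16}.
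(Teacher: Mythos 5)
Your argument is correct, but it is not the route the paper takes: the paper's proof of Lemma~5 is a one-line transplant of the proof of \cite[Lemma 5]{CBJ16} (with the substitutions $q_{Y^{1:N}} \leftarrow Q_{X^NY^N}$, $\tilde{p}_{Y_i^{1:N}} \leftarrow \tilde{P}_{Y_i^NX_i^N}$), and that argument goes through the quasi-triangle inequality for KL divergence (the \cite[Lemma 16]{CBJ16}-type bound already used in step (c) of Lemma~2), whose correction term of the form $kN\cdot\sqrt{(k-1)\delta_N^{(3)}}$ is exactly what produces the advertised scaling $\delta_N^{(4)}={\cal O}(k^{3/2}N^{23/8}\delta_N^{1/8})$. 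You instead use the exact identity $\mathbb{D}\big(\tilde{P}_{X_{1:k}^NY_{1:k}^N}\big\|\prod_{i=1}^k Q_{X^NY^N}\big)=\mathbb{D}\big(\tilde{P}_{X_{1:k}^NY_{1:k}^N}\big\|\prod_{i=1}^k\tilde{P}_{X_i^NY_i^N}\big)+\sum_{i=1}^k\mathbb{D}\big(\tilde{P}_{X_i^NY_i^N}\big\|Q_{X^NY^N}\big)$, which is valid because both reference measures are products over blocks and the block-$i$ marginal of $\tilde{P}_{X_{1:k}^NY_{1:k}^N}$ is $\tilde{P}_{X_i^NY_i^N}$; combined with Lemmas~2 and~4 this gives the tighter bound ${\cal O}(kN^{15/4}\delta_N^{1/4}+kN^{3/2}\delta_N^{1/2})$, which is indeed dominated by $\delta_N^{(4)}$ since $\delta_N=2^{-N^\beta}$ kills any polynomial factor (your parenthetical arithmetic is slightly garbled --- the relevant comparison is $N^{30}\delta_N^2\le N^{23}\delta_N$, i.e.\ $N^7\delta_N\le 1$ eventually --- but the conclusion stands). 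So your proof is more elementary and self-contained, avoids the min$^*$-dependent correction constants, and actually yields a strictly stronger estimate; the paper's approach buys consistency with the imported \cite{CBJ16} machinery and explains why the lemma is stated with the looser $k^{3/2}N^{23/8}\delta_N^{1/8}$ rate rather than the $kN^{15/4}\delta_N^{1/4}$ your decomposition would support.
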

\begin{proof} We reuse the proof of \cite[Lemma 5]{CBJ16} with substitutions 
	 $q_{Y^{1:N}} \leftarrow Q_{X^NY^N},$  $\tilde{p}_{Y_i^{1:N}} \leftarrow \tilde{P}_{Y_i^NX_i^N}.$ 
\end{proof}

\begin{theorem}\label{Thm:PCRR} The polar coding scheme described in Algorithms 1, 2 achieves the region stated in Theorem~\ref{Thm:JointCRR}. It satisfies \eqref{Thm:JointCRR} for a binary input DMC channel and a target distribution $q_{XY}$ defined over ${\cal X}\times {\cal Y}$, with an axillary random variable $C$ defined over the binary alphabet. 
\end{theorem}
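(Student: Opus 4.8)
The plan is to establish Theorem~\ref{Thm:PCRR} by combining the distributional approximation results of Lemmas~\ref{lemma1}--\ref{lemma5} with a careful accounting of the rates of randomness consumed and the communication rate used by Algorithms~1 and~2, and then verifying that these rates, in the limit $N\to\infty$ and $k\to\infty$, satisfy the inequalities \eqref{equ:JSRR1}--\eqref{equ:JSRR7}. First I would invoke Lemma~\ref{lemma5}: since $\delta_N^{(4)}={\cal O}(k^{3/2}N^{23/8}\delta_N^{1/8})$ with $\delta_N=2^{-N^\beta}$, for any fixed $k$ the bound vanishes as $N\to\infty$, so by Pinsker's inequality $\norm{\tilde P_{X_{1:k}^NY_{1:k}^N}-Q_{X^{1:kN}Y^{1:kN}}}_{\scriptscriptstyle TV}\le\sqrt{(\ln 2/2)\,\delta_N^{(4)}}<\epsilon$ for $N$ large enough; this is exactly the strong coordination condition \eqref{eq:StrngCoorCondtion} over the $kN$-length induced pmf (one should note that the extra block $k+1$ contributes a vanishing fraction and that the per-block marginals are $Q_{X^NY^N}$ by Lemma~\ref{lemma2}, so the chaining does not spoil the target).

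Next I would carry out the rate bookkeeping. The communication rate is $R_c=\lim_{N\to\infty}|{\cal V}_C\setminus{\cal H}_{C|B}|/N = H(C)-H(C|B)=I(B;C)$ on the ``reliably-decodable'' portion of $U_2^N$, with a vanishing overhead $|{\cal F}_3|/N\to H(C|B)-H(C|BX)\le I(B;C)-I(X;C)$ needed for the extra chaining block $k+1$, amortized over $k$ blocks; letting $k\to\infty$ after $N\to\infty$ drives this overhead to zero, so the effective rate approaches $I(B;C)$, and since we merely need $R_c<I(B;C)$ we have slack — this matches \eqref{equ:JSRR5}, while \eqref{equ:JSRR4} $R_c>I(X;C)$ is what guarantees $|{\cal F}_3|\le|{\cal F}_1\cup{\cal F}_2|$ so the embedding in Algorithm~1 is feasible. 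The shared randomness rate is $R_o=\lim|{\cal F}_2\cup\hat{\cal F}_4|/N+|{\cal F}_7|/N$ (reused) plus the recycled-per-block part $|\check{\cal F}_4\cup{\cal F}_1|/N$; using the asymptotic set sizes from \eqref{eq:C_Sets}, \eqref{eq:A_Sets} and the identities ${\cal F}_1\cup{\cal F}_2=({\cal V}_{C|X})\cap{\cal H}_{C|B}^c$, ${\cal F}_4={\cal H}_{C|BX}$, etc., these telescope to the expressions $R_o+R_c>I(XY;C)$ and the joint ones \eqref{equ:JSRR1}--\eqref{equ:JSRR3}; similarly $\rho_1=\lim|{\cal F}_6|/N=\lim|{\cal V}_{A|CX}\setminus{\cal V}_{A|CXY}|/N=H(A|CX)-H(A|CXY)=I(Y;A|CX)$, which one checks equals $R_a+R_c-I(X;AC)$ under the chosen code rates, giving \eqref{equ:JSRR6}, and $\rho_2=\lim|{\cal V}_{Y|BC}|/N=H(Y|BC)$ giving \eqref{equ:JSRR7}. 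I would present these as a short sequence of set-size computations, each justified by the limit formulas already stated after \eqref{eq:C_Sets}--\eqref{eq:Y_Set}.

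Then I would address decodability: Algorithm~2 reconstructs $\widehat U_{2_i}^N$ correctly on ${\cal F}_3$ because of the reverse-order chaining (bits of ${\cal F}_3$ in block $i$ live in ${\cal F}_1\cup{\cal F}_2$ of block $i+1$, which are recovered first, with the base case supplied by the good channel code in block $k+1$), and correctly on $\acute{\cal F}_{3_1}\cup\acute{\cal F}_{3_2}\cup{\cal F}_5$ via the successive-cancellation decoder \eqref{eq:MsgDec}; the Crypto-Lemma step \cite[Lemma 2]{Forney2004role} guarantees the XOR-masked bits remain uniform so the distributional analysis of Lemma~\ref{lemma2}, which assumed the ideal reconstruction $\widehat C_i^N$, applies verbatim. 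I would also note that the block-error probability of the extra channel code in block $k+1$ can be made $o(1)$ by standard polar channel coding since its rate is $|{\cal F}_3|/N<I(B;C)$, and fold this into the total-variation bound. Finally I would combine everything: given a target tuple $(R_o,\rho_1,\rho_2)$ strictly inside the region \eqref{equ:JSRR}, choose $R_a,R_c$ witnessing the strict inequalities, pick $\beta<1/2$, take $N$ large and then $k$ large; Lemma~\ref{lemma5} plus Pinsker give \eqref{eq:StrngCoorCondtion}, and the rate computations show the scheme uses no more than the allotted randomness and communication, establishing achievability. The main obstacle I anticipate is the rate accounting in the presence of chaining and randomness recycling — in particular, verifying that the per-block ``fresh'' randomness ${\cal F}_{3_1}^{(i)}$ and ${\cal F}_1$-fresh bits, amortized correctly, do not inflate $R_o$ beyond $I(XY;C)-R_c$ when $k\to\infty$, and that the identities relating ${\cal F}_6,{\cal F}_7,{\cal F}_8,{\cal F}_9$ to the mutual-information quantities in \eqref{equ:JSRR1}--\eqref{equ:JSRR3} hold exactly; the distributional part is essentially handed to us by Lemmas~\ref{lemma1}--\ref{lemma5}.
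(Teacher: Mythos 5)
Your overall strategy is the same as the paper's: invoke Lemma~\ref{lemma5} together with Pinsker's inequality to get \eqref{eq:StrngCoorCondtion}, and then verify \eqref{equ:JSRR1}--\eqref{equ:JSRR7} by converting the sizes of the polarization sets consumed by Algorithms~1 and~2 into entropy and mutual-information terms (with the reused randomness amortized as $N\to\infty$ followed by $k\to\infty$). The distributional half of your argument is fine and matches the paper. The gap is in the rate bookkeeping, which is the actual content of this proof. The paper identifies the communication rate with the $X$-dependent bits of $U_2^N$, namely $R_c=|{\cal F}_3\cup{\cal F}_5|/N=|{\cal V}_C\setminus{\cal V}_{C|X}|/N\to I(X;C)$ (see \eqref{eq:Rc}); the constraint $R_c<I(B;C)$ in \eqref{equ:JSRR5} is then met with strict slack whenever $I(X;C)<I(B;C)$, and its role is precisely to make the chaining feasible, $|{\cal F}_3|\le|{\cal F}_1\cup{\cal F}_2|$. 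You instead set $R_c=|{\cal V}_C\setminus{\cal H}_{C|B}|/N\to I(B;C)$ and then claim ``we merely need $R_c<I(B;C)$ so we have slack''; taken literally this places the limit on the boundary and does not verify \eqref{equ:JSRR5}, and, inserted into \eqref{equ:JSRR6}, it would require $\rho_1>I(B;C)-I(X;C)$, which the scheme's actual local randomness $|{\cal F}_6|/N\to I(A;Y|CX)$ need not supply. Relatedly, ${\cal F}_3$ is not a vanishing overhead amortized into block $k+1$: its bits are carried in every block by embedding into ${\cal F}_1\cup{\cal F}_2$ of the next block, and only the last block's ${\cal F}_3$ rides on block $k+1$.

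Two further slips in the same vein. First, your bound $|{\cal F}_3|/N\to H(C|B)-H(C|BX)\le I(B;C)-I(X;C)$ is not the relevant comparison and need not hold; the correct feasibility check is $|{\cal F}_3|$ against $|{\cal F}_1\cup{\cal F}_2|/N\to H(C|X)-H(C|BX)$, which reduces exactly to $I(X;C)\le I(B;C)$, i.e., to \eqref{equ:JSRR4}--\eqref{equ:JSRR5}. Second, $\rho_1=I(A;Y|CX)$ does not ``equal'' $R_a+R_c-I(X;AC)$: with the scheme's rates $R_a\to I(X;A|C)$ (see \eqref{eq:Ra}) and $R_c\to I(X;C)$, the right-hand side of \eqref{equ:JSRR6} tends to $0$, and the constraint holds because $I(A;Y|CX)\ge 0$, not by an identity. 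With these corrections, and with the paper's computation \eqref{eq:Ro} giving $R_o\to I(Y;C|X)$ so that $R_o+R_c\to I(XY;C)$, your argument coincides with the paper's proof; your additional remarks on the reverse-order decodability of the chaining, the Crypto Lemma preserving uniformity, and the reliability of the block-$(k+1)$ code are consistent with the scheme, though the paper does not elaborate on them in the proof of Theorem~\ref{Thm:PCRR}.
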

\begin{proof}
	The common randomness rate $R_o$ is given as
	\begin{align} \label{eq:Ro}
	\frac{|\bar{J}_1|+|J_{1:k}|}{kN}
	&= \frac{|{\cal V}_{C|XY}|+k|{\cal V}_{C|X}\setminus {\cal V}_{C|XY}|}{Nk} \notag\\
	&= \frac{|{\cal V}_{C|XY}|}{kN} +\frac{|{\cal V}_{C|X}\setminus {\cal V}_{C|XY}|}{N} \notag\\
	&\xrightarrow{N\rightarrow\infty} \frac{H(C|XY)}{k}+ I(Y;C|X)  \notag\\
	&\xrightarrow{k\rightarrow\infty} I(Y;C|X).
	\end{align}
	
	The communication rate $R_c$ is given as
		\begin{align}\label{eq:Rc}
		\frac{k|{\cal F}_5 \cup {\cal F}_3|}{kN}& 
		= \frac{k|{\cal V}_{C}\setminus {\cal V}_{C|X}|}{Nk} 
		= \frac{|{\cal V}_{C}\setminus {\cal V}_{C|X}|}{N} \notag\\ 
		&\xrightarrow{N\rightarrow\infty} I(X;C),
		\end{align}
		whereas  $R_a$ can be written as
		\begin{align} \label{eq:Ra}
		\frac{|{\cal V}_{A|CXY}|+k|{\cal F}_8|}{kN}& \notag 
		= \frac{|{\cal V}_{A|CXY}|+k|{\cal V}_{A|C}\setminus {\cal V}_{A|CX}|}{kN} \notag\\
		&= \frac{|{\cal V}_{A|CXY}|}{kN} +\frac{|{\cal V}_{A|C}\setminus {\cal V}_{A|CX}|}{N} \notag\\
		&\xrightarrow{N\rightarrow\infty} I(A;X|C) + \frac{H(A|CXY)}{k}\notag\\
		&\xrightarrow{k\rightarrow\infty} I(A;X|C). 
		\end{align}
		
	The rates of local randomness $\rho_1$ and $\rho_2,$ respectively, are given as
			\begin{align}\label{eq:rho1}
		\rho_1=	\frac{k|{\cal F}_6|}{kN}& = \frac{k|{\cal V}_{A|CX}\setminus {\cal V}_{A|CXY}|}{Nk}=\frac{|{\cal V}_{A|CX}\setminus {\cal V}_{A|CXY}|}{N} \notag\\
			&\xrightarrow{N\rightarrow\infty} I(A;Y|CX)\quad \text{and} \\
                        \label{eq:rho2}
			\rho_2&=\frac{k|{V}_{Y|BC}|}{kN} \xrightarrow{N\rightarrow\infty} H(Y|BC).
			\end{align}
			
	Finally we see that conditions \eqref{equ:JSRR1}-\eqref{equ:JSRR7} are satisfied by \eqref{eq:Ro}-\eqref{eq:rho2}. Hence, given $R_a$, $R_o$, $R_c$ satisfying Theorem~\ref{Thm:JointCRR}, based on Lemma~5 and Pinsker's inequality we have  
	\begin{align} \label{eq:resolvProof}
	\mathbb E&\big[||\tilde{P}_{X_{1:k}^NY_{1:k}^N}-Q_{X^{1:kN}Y^{1:kN}}||_{{\scriptscriptstyle TV}}\big] \notag\\ 
	&\leq  \mathbb E\Big[\sqrt{2\mathbb{D}(\tilde{P}_{X_{1:k}^NY_{1:k}^N}||Q_{X^{1:kN}Y^{1:kN}})} \;\Big] \notag\\
	& \leq \sqrt{2\mathbb E \big[\mathbb{D}(\tilde{P}_{X_{1:k}^NY_{1:k}^N}||Q_{X^{1:kN}Y^{1:kN}})\big]}\mathop{\longrightarrow}^{N\rightarrow \infty} 0.
	\end{align} 	
	As a result, from \eqref{eq:resolvProof} there
        exists an $N\in\mathbb N$ for which the polar code-induced pmf
        between the pair of actions satisfies the strong coordination
        condition is given by \eqref{eq:StrngCoorCondtion}.
\end{proof}
	\bibliographystyle{IEEEtran}
	\bibliography{IEEEabrv,references}

\end{document}